\newtheorem{thm}{Theorem}[section]
\newtheorem{ex}[thm]{Example}
\newtheorem{lemma}[thm]{Lemma}
\newtheorem{prop}[thm]{Proposition}
\newtheorem{defn}[thm]{Definition}
\newtheorem{remark}[thm]{Remark}
\makeatletter \@addtoreset{equation}{section}
\newcommand{\bN} { {\mathbb{N}}}
\newcommand{\bC} { {\mathbb{C}}}
\newcommand{\bZ} { {\mathbb{Z}}}
\newcommand{\bL} { {\mathbb{L}}}
\newcommand{\lc}{\operatorname{lc}}
\newcommand{\hc}{\operatorname{hc}}
\newcommand{\hm}{\operatorname{hm}}
\newcommand{\fp}{\operatorname{fp}}
\newcommand{\hp}{\operatorname{hp}}
\newcommand{\pp}{\operatorname{pp}}
\newcommand{\scale}{\operatorname{scale}}
\newcommand{\Li}{\operatorname{Li}}
\def\lc{\operatorname{lc}}
\begin{document}

\title{Additive Decompositions in Primitive Extensions
\thanks{S.\ Chen was supported by the NSFC Grants 11501552, 11688101 and
by the Fund of the Youth Innovation Promotion Association, CAS.}}

\author{  \bigskip
Shaoshi Chen, Hao Du, and Ziming Li \\
$^1$KLMM,\, Academy of Mathematics and Systems Science \\
Chinese Academy of Sciences, Beijing, 100190, China\\
$^2$School of Mathematical Sciences, University of Chinese Academy of Sciences\\ Beijing 100049, (China)\\
\medskip
{\sf schen@amss.ac.cn,\,  duhao@amss.ac.cn, \, zmli@mmrc.iss.ac.cn}
}

\maketitle

\begin{abstract}
This paper extends the classical Ostrogradsky--Hermite reduction for rational functions to more general functions
in primitive extensions of certain types.
For an element $f$ in such an extension $K$, the extended reduction
decomposes $f$ as the sum of a derivative in~$K$ and another element $r$ such that $f$ has an antiderivative in $K$
if and only if $r=0$; and $f$ has an elementary antiderivative over $K$ if and only if $r$ is a linear combination of logarithmic
derivatives over the constants when $K$ is a logarithmic extension. Moreover,~$r$ is minimal in some
sense. Additive decompositions may lead to reduction-based creative-telescoping methods for nested
logarithmic functions, which are not necessarily $D$-finite.
\end{abstract}
\section{Introduction}\label{SECT:intro}
Symbolic integration, together with its discrete counterpart symbolic summation, nowadays has played a crucial role
in building the infrastructure for applying computer algebra tools to solve problems in combinatorics and
mathematical physics~\cite{KKZ2009, KKZ2011, Schneider2013}.
The early history of symbolic integration starts from the first tries of developing programs in LISP to evaluate
integrals in freshman calculus symbolically in the 1960s.
Two representative packages at the time were Slagle's SAINT~\cite{Slagle1961} and Moses's SIN~\cite{Moses1968} which were both
based on integral transformation rules and pattern recognition.
The algebraic approach for symbolic integration is
initialized by Ritt~\cite{Ritt1948} in terms of differential algebra~\cite{Kaplansky1957}, which eventually leads to the Risch algorithm
for the integration of elementary functions~\cite{Risch1969, Risch1970}. The efficiency of the Risch algorithm is further improved
by Rothstein~\cite{Rothsetin1976}, Davenport~\cite{Davenport1981}, Trager~\cite{Trager1984}, Bronstein~\cite{BronsteinThesis, Bronstein1990} etc.
Some standard references on this topic are Bronstein's book~\cite{BronsteinBook} and Raab's survey~\cite{Raab2013} that gives an overview of  the Risch algorithm
and its recent developments.

The central problem in symbolic integration is whether the integral of a given function can be written in
\lq\lq closed form\rq\rq. Its algebraic formulation is given in terms of
differential fields and their extensions~\cite{Kaplansky1957, BronsteinBook}.
A differential field~$F$ is a field together with a derivation $'$ that is an additive map on $F$ satisfying the product rule $(fg)' = f' g + fg'$
for all $f, g\in F$.
A given element~$f$ in $F$
is said to be \emph{integrable} in $F$ if $f=g'$ for some $g\in F$. The problem of deciding whether
a given element is integrable or not in~$F$ is called the \emph{integrability problem} in~$F$.
For example, if $F$ is the field of rational functions, then for $f=1/x^2$ we can
find $g=-1/x$, while for $f=1/x$ no suitable $g$ exists in~$F$. When $f$ is not integrable
in~$F$, there are several other questions we may ask. One possibility is to ask
whether there is a pair~$(g, r)$ in~$F\times F$ such that $f = g' + r$,
where~$r$ is minimal in some sense and $r=0$ if~$f$ is integrable.
This problem is called the \emph{decomposition problem} in~$F$.
Extensive work has been done to solve the integrability and decomposition problems
in differential fields of various kinds.

Abel and Liouville pioneered the early work on the integrability problem
in the 19th century~\cite{Ritt1948}.
In 1833, Liouville provided a first decision procedure for solving the
integrability problem on algebraic functions~\cite{Liouville1833}. For other classes of functions,
complete algorithms for solving the integrability problem are much more recent: 1) the Risch algorithm~\cite{Risch1969, Risch1970} in the case of elementary functions was presented in 1969; 2) the Almkvist--Zeilberger algorithm~\cite{Almkvist1990} (also known as the differential Gosper algorithm) in the case of hyperexponential functions was given in 1990; 3) Abramov and van Hoeij's algorithm~\cite{AbramovHoeij1997} generalized the previous algorithm to the general $D$-finite functions of arbitrary order in 1997.

The decomposition problem was first considered by Ostrogradsky~\cite{Ostrogradsky1845} in 1845 and later by Hermite~\cite{Hermite1872} for rational functions.
The idea of Ostrogradsky and Hermite is crucial for algorithmic treatments of the problem, since it avoids the root-finding of polynomials and only uses
the extended Euclidean algorithm and squarefree factorization to obtain the additive decomposition of a rational function.
This reduction is a basic tool for the integration of rational functions and also
plays an important role in the base case of our work. We will refer this reduction as to
the \emph{rational reduction} in this paper.
The rational reduction has been extended to more general classes of functions including algebraic functions~\cite{Trager1984, Chen2016a}, hyperexponential functions~\cite{GeddesLeLi2004, BCCLX2013},
multivariate rational functions~\cite{BLS2013, Lairez2016}, and more recently including $D$-finite functions~\cite{Chen2018JSC, Hoeven2017}.
Blending reductions with creative telescoping~\cite{Almkvist1990, Zeilberger1991} leads to the fourth and most recent generation of creative telescoping algorithms,
which are called reduction-based algorithms~\cite{BCCL2010, BCCLX2013, BLS2013, Chen2016a, Chen2018JSC}.

The telescoping problem can also be formulated for elementary functions. Two related problems are
how to decide the existence of telescopers for elementary functions and how to compute one if telescopers exist.
Reduction algorithms have been shown to be crucial for solving these two problems. This naturally motivates
us to design reduction algorithms for elementary functions.

In this paper, we extend
the rational reduction to elements in straight and flat towers of primitive extensions (see Definition~\ref{DEF:tower}).
Our extended reductions solve the decomposition problems in such towers without solving any Risch equations (Theorems~\ref{TH:sadd} and~\ref{TH:fadd}), and determine elementary integrability
in such towers when  primitive extensions are logarithmic (Theorem~\ref{TH:elem}).

The remainder of this paper is organized as follows.
We present basic notions and terminologies on differential fields,
and collect some useful facts about integrability in primitive extensions
in Section~\ref{SECT:preliminaries}.  We define the notions of straight and flat towers,
and describe some straightforward reduction processes in Section~\ref{SECT:priext}.
Additive decompositions in straight and flat towers are given in Sections~\ref{SECT:str} and~\ref{SECT:flat},
respectively. The two decompositions are used to determine elementary integrability in Section~\ref{SECT:elem}.
Examples are given in Section~\ref{SECT:ct} to illustrate that the decompositions may be useful to study the telescoping problem for elementary functions that are not $D$-finite.

\section{Preliminaries}\label{SECT:preliminaries}
Let $(F, \, ^\prime)$ be a differential field of characteristic zero, and
let $C_F$ denote the subfield of constants in $F$.
Let $E$ be a differential field extension of $F$.
An element $z$ of $E$ is said to be {\em primitive} over $F$ if~$z^\prime$ belongs to $F$.
If $z$ is primitive and transcendental over $F$ with
$C_{F(z)}=C_F$,
then it is called a {\em primitive monomial} over $F$, which is a special instance of Liouvillian monomials~\cite[Definition 5.1.2]{BronsteinBook}.

Let $z$ be a primitive monomial over $F$ in the rest of this section.
An element $f \in F(z)$  is said to be {\em proper with respect to $z$ or $z$-proper} for brevity
if the degree of its numerator in $z$ is lower than that of its denominator.
In particular, zero is $z$-proper.
It is well-known that $f$ can be uniquely written as the sum of a $z$-proper element and a polynomial in $z$.
They are called the fractional and polynomial parts of~$f$, and denoted by $\fp_z(f)$ and $\pp_z(f)$, respectively.

Let $p$ be a polynomial in $F[z]$.
The degree and leading coefficient of $p$ are denoted by $\deg_z(p)$ and $\lc_z(p)$, respectively.
By~\cite[Thereom 5.1.1]{BronsteinBook}, $p$ is squarefree if and only if $\gcd(p, p^\prime)=1$.
A $z$-proper element is $z$-simple if its denominator is squarefree.
Note that $z$-simple elements are not necessarily $z$-proper
in~\cite{BronsteinBook}, but they are assumed to be  $z$-proper in this paper without loss of generality.

For $S \subset E$, we use $S^\prime$ to denote the set~$\{ f^\prime \mid f \in S\}$. If $S$ is a $C_E$-linear subspace, so is $S^\prime$.
For $m \in \bN$, let
\[  F[z]^{(m)} = \{ p \in F[z] \mid \deg_{z}(p) < m \}. \]
In particular, $F[z]^{(0)} = \{0\}$.

For $f \in F(z)$,  Algorithm {\tt HermiteReduce} in~\cite[page 139]{BronsteinBook} computes a $z$-simple
element $g \in F(z)$ and a polynomial $p \in F[z]$ such that
$
f \equiv g + p \mod F(z)^\prime.
$
This algorithm is an extension of the rational reduction by Ostrogradsky and Hermite.  For rational functions, we have
$p=0$ since all polynomials have polynomial antiderivatives.
Algorithm {\tt HermiteReduce} is fundamental for our approach to additive decompositions in primitive extensions.
\begin{lemma} \label{LM:hp}
Let $g$ be a $z$-simple element in $F(z)$. Then $g=0$ if~$g \in F(z)^\prime + F[z].$
\end{lemma}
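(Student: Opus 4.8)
The plan is to argue by contradiction through a valuation analysis at the irreducible factors of the denominator of $g$, exploiting that every prime of positive degree is \emph{normal} because $z$ is primitive. Write $g=a/d$ in lowest terms with $d\in F[z]$ squarefree and $\deg_z a<\deg_z d$, and suppose toward a contradiction that $g\neq 0$, so that $d$ has a monic irreducible factor $q$ with $\deg_z q\ge 1$. For such a $q$ let $v_q$ denote the $q$-adic valuation on $F(z)$.

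First I would show that $q$ is normal, i.e.\ $v_q(q')=0$ (equivalently $q\nmid q'$). Since $z'\in F$, differentiating $q=z^n+\cdots$ lowers the degree: the leading coefficient $1$ has zero derivative, so $\deg_z(q')\le n-1<\deg_z(q)$. Moreover $q'\neq 0$, since otherwise $q$ would be a constant of $F(z)$, but $C_{F(z)}=C_F\subseteq F$ would force $\deg_z q=0$, contradicting $\deg_z q\ge 1$. Hence $q\nmid q'$. (Alternatively this follows from $\gcd(d,d')=1$, which holds by the squarefree criterion recalled above since $d$ is squarefree.) Consequently the local ring $\cO_q=\{h\in F(z): v_q(h)\ge 0\}$, being the localization of $F[z]$ at $(q)$, is closed under the derivation: for $a/b$ with $q\nmid b$ the quotient $(a'b-ab')/b^2$ again has $q\nmid b^2$.

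The key step is the pole-order formula: if $v_q(h)=-m<0$, then $v_q(h')=-(m+1)$. To see this, set $f=q^m h$, so $v_q(f)=0$ and $f\in\cO_q$, whence $v_q(f')\ge 0$. Writing $u=1/q$ and using $u'=-q'u^2$, we obtain
\[
 h'=(f u^m)'=f'u^m-m\,f q'\,u^{m+1}.
\]
Here $v_q(f'u^m)\ge -m$, while $v_q(m f q' u^{m+1})=-(m+1)$ exactly, because $v_q(m)=v_q(f)=v_q(q')=0$. As the second summand has strictly smaller valuation, $v_q(h')=-(m+1)\le -2$. Together with the fact that $v_q(h)\ge 0$ implies $v_q(h')\ge 0$ (closure of $\cO_q$), this shows that $v_q(h')$ is never equal to $-1$ for any $h\in F(z)$.

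Finally I would combine these observations. By hypothesis $g=h'+p$ with $h\in F(z)$ and $p\in F[z]$; since $g=a/d$ is in lowest terms with $d$ squarefree and $q\mid d$ appearing to the first power, we have $v_q(a)=0$ and $v_q(d)=1$, so $v_q(g)=-1$, while $v_q(p)\ge 0$. Hence $v_q(h')=v_q(g-p)=-1$, contradicting the pole-order formula. Therefore $d$ has no irreducible factor of positive degree, so $d\in F$ and $z$-properness forces $a=0$; that is, $g=0$. The main obstacle is the pole-order formula of the third paragraph: one must verify both that differentiation raises the pole order by exactly one at a normal prime and that the term $-m f q' u^{m+1}$ cannot be cancelled — which is precisely where primitivity ($z'\in F$, so that no prime of positive degree is special) is indispensable.
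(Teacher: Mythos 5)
Your proof is correct and takes essentially the same approach as the paper's: pick an irreducible factor $q$ of the denominator of $g$, observe that $z$-simplicity forces the order of $g$ at $q$ to be exactly $-1$, and derive a contradiction from the fact that the order of a derivative $h'$ at such a prime is never $-1$ (it is either nonnegative or at most $-2$). The only difference is that where the paper cites Theorem 4.4.2(i) of Bronstein's book for this last fact, you prove it from scratch — including the normality of $q$, which is where primitivity ($z'\in F$) enters — so your argument is a self-contained expansion of the same proof rather than a different route.
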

\begin{proof} Suppose that~$g \neq 0$. Since $g$ is $z$-proper,
there exists a nontrivial irreducible polynomial $p \in F[z]$ dividing the denominator of~$g$.
Since $g \in F(z)^\prime + F[z],$
there exist $a \in F(z)$ and~$b \in F[z]$ such that $g = a^\prime + b.$
The order of $g$ at $p$ is equal to $-1$.
But the order of $a^\prime$ at $p$ is either nonnegative or less than $-1$ by Theorem 4.4.2~(i) in \cite{BronsteinBook},
and the order of~$b$ at $p$ is nonnegative, a contradiction.
\end{proof}
Every element $f \in F(z)$ is congruent to a unique $z$-simple element $g$ modulo~$ F(z)^\prime + F[z]$
by Algorithm {\tt HermiteReduce} and Lemma~\ref{LM:hp}.
We call $g$ the {\em Hermitian part} of $f$ with respect to $z$, denoted by $\hp_z(f)$.
The map $\hp_z$ is $C_F$-linear on $F(z)$. Its kernel is equal to~$F(z)^\prime + F[z]$.
Thus, two elements have the same Hermitian parts if they are congruent modulo~$F(z)^\prime + F[z]$.
This observation is frequently  used in the sequel.

Now, we collect some basic facts about primitive monomials. They are either straightforward or scattered in~\cite{BronsteinBook}. We list them below for the reader's convenience.
\begin{lemma} \label{LM:lc}
If $p \in F[z]$ and $p \in F(z)^\prime$, then there exists $c \in C_F$ such that
$\lc_z(p) \equiv c z^\prime \mod F^\prime.$
\end{lemma}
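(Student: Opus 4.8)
The plan is to reduce to the case where the element whose derivative is $p$ is a polynomial, and then to compute the leading coefficient of that polynomial's derivative explicitly. First I would write $p = a^\prime$ for some $a \in F(z)$ and decompose $a = \pp_z(a) + \fp_z(a)$. Since $z^\prime \in F$, differentiating a polynomial in $z$ again yields a polynomial in $z$, so $\pp_z(a)^\prime \in F[z]$. On the other hand, the derivative of a $z$-proper element is again $z$-proper: if $\fp_z(a) = N/D$ with $\deg_z N < \deg_z D$, then $\fp_z(a)^\prime = (N^\prime D - N D^\prime)/D^2$, and because differentiation does not raise the degree in $z$ (as $z^\prime \in F$), the numerator has degree strictly below $\deg_z(D^2)$. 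Hence $p = \pp_z(a)^\prime + \fp_z(a)^\prime$ is the splitting of $p$ into its polynomial and fractional parts; since $p \in F[z]$, the fractional part $\fp_z(a)^\prime$ must vanish. Then $\fp_z(a) \in C_{F(z)} = C_F$ is $z$-proper, which forces $\fp_z(a) = 0$, so that $a \in F[z]$.

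Next I would write $a = \sum_{j=0}^m a_j z^j$ with $a_j \in F$ and $a_m \neq 0$ (the case $p = 0$ being handled by taking $c = 0$), and differentiate using $z^\prime \in F$: the coefficient of $z^m$ in $a^\prime$ is $a_m^\prime$, while for $j < m$ the coefficient of $z^j$ is $a_j^\prime + (j+1)a_{j+1}z^\prime$. If $a_m \notin C_F$, then $a_m^\prime \neq 0$, so $\deg_z p = m$ and $\lc_z(p) = a_m^\prime \in F^\prime$, and the claim holds with $c = 0$. If instead $a_m \in C_F$, then necessarily $m \ge 1$ (otherwise $a \in C_F$ and $p = 0$), the coefficient of $z^m$ vanishes, and the coefficient of $z^{m-1}$ equals $a_{m-1}^\prime + m\,a_m z^\prime$; taking $c = m\,a_m \in C_F$ would then give $\lc_z(p) - c\,z^\prime = a_{m-1}^\prime \in F^\prime$, as required, provided this coefficient is nonzero so that it really is the leading coefficient.

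The hard part is precisely to rule out a further cancellation, that is, to show that the coefficient of $z^{m-1}$ cannot also vanish. If it did, we would have $m\,a_m z^\prime = -a_{m-1}^\prime$ with $m\,a_m$ a nonzero constant, whence $z^\prime \in F^\prime$; writing $z^\prime = w^\prime$ with $w \in F$ gives $(z - w)^\prime = 0$, so $z - w \in C_{F(z)} = C_F \subseteq F$ and thus $z \in F$, contradicting the transcendence of the primitive monomial $z$ over $F$. This is where the defining hypotheses of a primitive monomial, namely $C_{F(z)} = C_F$ together with transcendence, enter in an essential way: they guarantee that the degree drops by exactly one in the passage from $a$ to $p$, so that the leading coefficient of $p$ captures exactly the term $m\,a_m z^\prime$ modulo $F^\prime$.
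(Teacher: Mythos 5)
Your proof is correct, and its core computation is the same as the paper's: reduce to a polynomial antiderivative and compare its top two coefficients. The difference is that the paper imports two facts from Bronstein's book, while you prove both from scratch. First, the paper gets $a \in F[z]$ from Theorem 4.4.2(i) of that book, whereas you derive it via the observation that the derivative of a $z$-proper element is $z$-proper (which is a clean, correct argument). Second, the paper fixes $d = \deg_z(p)$ and invokes Lemma 5.1.2 of the book for the a priori bound $\deg_z(a) \le d+1$, then simply reads off the coefficients of $z^{d+1}$ and $z^{d}$; because it starts from the known degree of $p$, the possibility of extra cancellation never arises. You instead start from $m = \deg_z(a)$ and must determine $\deg_z(p)$, which is exactly why you need your ``hard part'': showing that $a_{m-1}^\prime + m\,a_m z^\prime = 0$ would force $z^\prime \in F^\prime$, hence $z - w \in C_{F(z)} = C_F$ and $z \in F$, contradicting transcendence. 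That argument is precisely the content of Bronstein's Lemma 5.1.2 for primitive monomials, so in effect you have inlined its proof. What your route buys is a self-contained argument that makes explicit where the monomial hypotheses (transcendence and $C_{F(z)} = C_F$) actually enter; what the paper's route buys is brevity, at the cost of leaning on the reference.
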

\begin{proof}
Assume~$p = r^\prime$ for some $r \in F(z)$.
Then $r \in F[z]$
by Theorem 4.4.2~(i) in \cite{BronsteinBook}.
Set $d = \deg_z(p)$ and $\ell = \lc_z(p)$.
Then $\deg_z(r) \le d+1$ by Lemma 5.1.2 in \cite{BronsteinBook}.
Assume that $r \equiv a z^{d+1} + b z^d \mod F[z]^{(d)}$ for some $a, b \in F$.
Then
$$r^\prime \equiv a^\prime z^{d+1} + ((d+1)az^\prime + b^\prime) z^d \mod F[z]^{(d)}.$$
Since $p=r^\prime$, we have that $a^\prime = 0$ and $\ell = (d+1)az^\prime + b^\prime$.
Hence, $\ell \equiv c z^\prime \mod F^\prime$ with $c = (d+1)a \in C_F$.
\end{proof}
The next lemma will be used to decrease the degree of a polynomial modulo $F(z)^\prime$. Its proof is
a straightforward application of integration by parts.
\begin{lemma} \label{LM:ibp}
We have
$f^\prime z^d \equiv 0 \mod F(z)^{\prime} + F[z]^{(d)}$
for all~$f \in F$ and $d \in \bN$.
\end{lemma}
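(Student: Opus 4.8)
The plan is to realize this congruence directly by integration by parts applied to the product $f z^d$. The key structural fact I would exploit is that $z$ is primitive over $F$, so its derivative $z'$ lies in $F$; this is precisely what keeps the resulting correction term inside $F[z]^{(d)}$ rather than escaping $F(z)$.

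First I would compute $(f z^d)'$ using the product rule:
\[
(f z^d)' = f' z^d + f\,(z^d)' = f' z^d + d\, f\, z'\, z^{d-1}.
\]
Rearranging isolates the quantity of interest: $f' z^d = (f z^d)' - d\, f\, z'\, z^{d-1}$. The whole point is that both terms on the right-hand side are exactly the pieces we are entitled to discard modulo $F(z)' + F[z]^{(d)}$.

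Next I would check each term against the two summands of the modulus. The first term $(f z^d)'$ is the derivative of an element of $F(z)$, hence lies in $F(z)'$. For the second term, since $f \in F$ and $z' \in F$, the coefficient $d\, f\, z'$ lies in $F$; therefore $d\, f\, z'\, z^{d-1}$ is a polynomial in $F[z]$ whose degree in $z$ is $d-1 < d$, so it lies in $F[z]^{(d)}$. Combining, $f' z^d \in F(z)' + F[z]^{(d)}$, which is the assertion.

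Finally I would dispose of the boundary case $d = 0$ separately, where $F[z]^{(0)} = \{0\}$ and the identity $(z^0)' = 0$ makes the correction term vanish: here $f' z^0 = f' = (f)' \in F' \subseteq F(z)'$, so the congruence still holds. I expect no genuine obstacle, as the argument is a single use of the product rule; the only point requiring a moment's care is confirming the degree bound $d-1 < d$, which is what guarantees the correction term lands in $F[z]^{(d)}$ rather than merely in $F[z]$.
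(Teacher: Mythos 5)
Your proof is correct and follows exactly the route the paper intends: the paper dispenses with a written proof by remarking that the lemma ``is a straightforward application of integration by parts,'' and your argument is precisely that computation, writing $f' z^d = (f z^d)' - d\, f\, z'\, z^{d-1}$ and using the primitivity of $z$ (i.e.\ $z' \in F$) to place the correction term in $F[z]^{(d)}$. Nothing is missing; the separate treatment of $d=0$ is a nice touch of care but not essential, since the correction term carries the factor $d$.
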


Recall that an element $f$ in $F$ is said to be a {\em logarithmic derivative in $F$}
if $f=a^\prime/a$ for some nonzero element $a \in F$.
\begin{lemma} \label{LM:ld}
If~$f$ is a $C_F$-linear combination of logarithmic derivatives in~$F(z)$, then
$f = \hp_{z}(f) + r,$
where $r$ is a $C_F$-linear combination of logarithmic derivatives  in~$F$.
\end{lemma}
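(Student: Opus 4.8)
The plan is to reduce the statement to computing the Hermitian part of a \emph{single} logarithmic derivative and then to exploit the $C_F$-linearity of $\hp_z$ recorded above. Write $f = \sum_i c_i\, a_i^\prime/a_i$ with $c_i \in C_F$ and nonzero $a_i \in F(z)$. By linearity of $\hp_z$ it suffices to understand $\hp_z(a^\prime/a)$ for one nonzero $a \in F(z)$ and to track the remainder $a^\prime/a - \hp_z(a^\prime/a)$; I will show this remainder is exactly $\gamma^\prime/\gamma$ for a suitable $\gamma \in F$.

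First I would factor $a$ over the UFD $F[z]$: since $F$ is a field we may write $a = \gamma \prod_j p_j^{e_j}$ with $\gamma \in F$, the $p_j$ monic irreducible in $F[z]$, and $e_j \in \bZ$. Taking logarithmic derivatives turns the product into a sum,
\[
\frac{a^\prime}{a} = \frac{\gamma^\prime}{\gamma} + \sum_j e_j\,\frac{p_j^\prime}{p_j}.
\]
The term $\gamma^\prime/\gamma$ lies in $F \subseteq F[z]$, hence in $\ker \hp_z = F(z)^\prime + F[z]$, so it contributes nothing to the Hermitian part.

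The heart of the argument is to show that for each monic irreducible $p := p_j$ the quotient $p^\prime/p$ is already $z$-simple, so that $\hp_z(p^\prime/p) = p^\prime/p$ by uniqueness of the Hermitian part. Two points must be checked, and this is where I expect the main work to lie. First, because $p$ is monic the degree-$\deg_z(p)$ coefficient of $p^\prime$ is the derivative of the leading coefficient $1$, hence $0$; since $z^\prime \in F$ the remaining contributions strictly lower the degree, so $\deg_z(p^\prime) < \deg_z(p)$ and $p^\prime/p$ is $z$-proper. (Here $p^\prime \neq 0$ because a monic polynomial of positive degree is not a constant, so it does not lie in $C_{F(z)} = C_F$.) Second, $p$ is irreducible, hence squarefree, so by the squarefreeness criterion recalled above $\gcd(p, p^\prime)=1$; thus $p^\prime/p$ is in lowest terms with squarefree denominator $p$, i.e.\ it is $z$-simple.

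Assembling everything by $C_F$-linearity gives
\[
\hp_z\!\left(\frac{a^\prime}{a}\right) = \sum_j e_j\,\frac{p_j^\prime}{p_j},
\qquad
\frac{a^\prime}{a} - \hp_z\!\left(\frac{a^\prime}{a}\right) = \frac{\gamma^\prime}{\gamma},
\]
and summing over $i$ yields $r = f - \hp_z(f) = \sum_i c_i\,\gamma_i^\prime/\gamma_i$ with each $\gamma_i \in F$, which is precisely a $C_F$-linear combination of logarithmic derivatives in $F$. The one genuinely nontrivial step is the $z$-simplicity claim for $p^\prime/p$: one must confirm both the degree drop (which truly uses that $p$ is monic and that $z$ is primitive, so the leading coefficient differentiates to $0$ rather than to a term of the same degree) and the coprimality of $p$ and $p^\prime$ via irreducibility. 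Everything else is formal manipulation together with the already-established structure of $\hp_z$ as a $C_F$-linear map with kernel $F(z)^\prime + F[z]$.
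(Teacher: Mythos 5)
Your proof is correct and follows essentially the same route as the paper's: reduce by $C_F$-linearity to a single logarithmic derivative, split off the logarithmic derivative of the ``content'' in $F$ (your $\gamma'/\gamma$, the paper's $w'/w$), observe that the part coming from monic polynomials in $F[z]$ is $z$-simple, and conclude via the kernel/uniqueness characterization of $\hp_z$. The only difference is one of self-containment: where the paper cites Bronstein's logarithmic-derivative identity and Lemma 5.1.2 to get $f = u'/u - v'/v + w'/w$ with $u'/u - v'/v$ being $z$-simple, you factor into monic irreducibles and verify the $z$-simplicity of each $p_j'/p_j$ directly (degree drop from $z$ being primitive and monicity, coprimality from irreducibility via the squarefreeness criterion), which is a valid unpacking of the same argument.
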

\begin{proof}
It suffices to assume that $f$ is a logarithmic derivative in~$F(z)$,
because the map $\hp_z$ is $C_F$-linear.

If $f=0$, then we choose $r=0$, which equals $1^\prime/1$.
Otherwise, there exist two monic polynomials $u, v \in F[z]$ and $w \in F$
such that $f = u^\prime/u - v^\prime/v + w^\prime/w$ by the logarithmic derivative identity in~\cite[page 104]{BronsteinBook}.
Note that $u^\prime/u - v^\prime/v$ is $z$-simple by Lemma 5.1.2 in \cite{BronsteinBook} and $w^\prime/w$ is in~$F$. Thus, $\hp_z(f) =  u^\prime/u - v^\prime/v$
and $r = w^\prime/w$.
\end{proof}

\section{Primitive extensions} \label{SECT:priext}
Let $(K_0, ^\prime)$ be a differential field of characteristic zero. Set $C=C_{K_0}$. Consider a tower of differential fields
\begin{equation} \label{EQ:ext}
 K_0 \subset K_1 \subset \cdots \subset K_n,
\end{equation}
where $K_i=K_{i-1}(t_i)$ for all $i$ with $1 \le i \le n$.
The tower given in~\eqref{EQ:ext} is said to be {\em primitive over $K_0$} if
$t_i$ is a primitive monomial over $K_{i-1}$ for all $i$ with $1 \le i \le n$.
The notation introduced in~\eqref{EQ:ext} will be used in the rest of the paper.
\begin{remark} \label{RE:indep}
The derivatives $t_1^\prime, \ldots, t_n^\prime$ are linearly independent over $C$, since $C_{K_n}=C$ in~\eqref{EQ:ext}.
\end{remark}
The following lemma tells us a way to modify the leading coefficient of a polynomial in $K_{n-1}[t_n]$ via integration by parts
and Algorithm {\tt HermiteReduce}.
\begin{lemma} \label{LM:priext}
Let the tower~\eqref{EQ:ext} be primitive with $n  \ge 1$. Then, for all $\ell \in K_{n-1}$ and $d \in \bN$,
there exist  a $t_{n-1}$-simple element $g \in K_{n-1}$ and a polynomial $h \in K_{n-2}[t_{n-1}]$ such that
$$\ell t_n^d \equiv (g+h) t_n^d \mod K_n^\prime + K_{n-1}[t_n]^{(d)}.$$
\end{lemma}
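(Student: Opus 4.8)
The plan is to do all the reduction on the coefficient $\ell$ inside $K_{n-1}=K_{n-2}(t_{n-1})$ and then multiply by $t_n^d$, using integration by parts to discard the derivative term that this multiplication introduces. The key tools are the Hermitian part $\hp_{t_{n-1}}$ (with base field $K_{n-2}$ and monomial $t_{n-1}$) together with Lemma~\ref{LM:ibp} (with base field $K_{n-1}$ and monomial $t_n$). Throughout I regard $K_{n-1}$ as a primitive extension of $K_{n-2}$, which is what makes the notation $t_{n-1}$, $K_{n-2}$ meaningful.

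First I would apply the Hermitian part to $\ell \in K_{n-1}$. Since, by the discussion following Lemma~\ref{LM:hp}, the kernel of $\hp_{t_{n-1}}$ on $K_{n-1}$ is exactly $K_{n-1}^\prime + K_{n-2}[t_{n-1}]$, putting $g:=\hp_{t_{n-1}}(\ell)$ produces a $t_{n-1}$-simple element $g\in K_{n-1}$ and a decomposition
\[
 \ell = g + a^\prime + h
\]
for some $a\in K_{n-1}$ and some polynomial $h\in K_{n-2}[t_{n-1}]$. Note that $g$ and $h$ already have the two shapes demanded in the statement. Next I would multiply this identity by $t_n^d$, obtaining
\[
 \ell\, t_n^d = (g+h)\, t_n^d + a^\prime t_n^d .
\]
To finish, it suffices to show that the last summand lies in the modulus, i.e.\ that $a^\prime t_n^d \in K_n^\prime + K_{n-1}[t_n]^{(d)}$. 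This is exactly Lemma~\ref{LM:ibp} applied with $f=a\in K_{n-1}$ and monomial $t_n$, giving $a^\prime t_n^d\equiv 0 \mod K_n^\prime + K_{n-1}[t_n]^{(d)}$. Combining the two displayed identities then yields $\ell\, t_n^d \equiv (g+h)\, t_n^d \mod K_n^\prime + K_{n-1}[t_n]^{(d)}$, as required.

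The argument is essentially an assembly of two previously established facts, so the care lies entirely in the bookkeeping rather than in any deep obstruction. The one step worth checking closely is the invocation of Lemma~\ref{LM:ibp}: its mechanism is the identity $(a\,t_n^d)^\prime = a^\prime t_n^d + d\,a\,t_n^\prime\,t_n^{d-1}$, and the crucial feature is that $t_n^\prime\in K_{n-1}$ because $t_n$ is primitive, so the correction $d\,a\,t_n^\prime\,t_n^{d-1}$ has coefficients in $K_{n-1}$ and $t_n$-degree $d-1<d$, hence sits in $K_{n-1}[t_n]^{(d)}$. One should also make sure the remainder of the Hermitian reduction of $\ell$ really is a \emph{polynomial} in $K_{n-2}[t_{n-1}]$ and that $g$ is genuinely $t_{n-1}$-simple; both follow from the definition of $\hp_{t_{n-1}}$ and the description of its kernel. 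Finally, the degenerate case $n=1$ (where $t_{n-1}$ and $K_{n-2}$ do not literally exist) must be read by the appropriate convention and contributes nothing of substance.
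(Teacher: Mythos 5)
Your proof is correct and is essentially the paper's own argument: the paper invokes Algorithm \texttt{HermiteReduce} directly to write $\ell = f^\prime + g + h$ with $g$ being $t_{n-1}$-simple and $h \in K_{n-2}[t_{n-1}]$, then multiplies by $t_n^d$ and applies Lemma~\ref{LM:ibp} to discard $f^\prime t_n^d$, exactly as you do (your use of $\hp_{t_{n-1}}$ and its kernel is just a repackaging of the same \texttt{HermiteReduce} decomposition).
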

\begin{proof}
By Algorithm {\tt HermiteReduce}, there are $f, g \in K_{n-1}$
with $g$ being $t_{n-1}$-simple, and $h \in K_{n-2}[t_{n-1}]$ such that
$\ell = f^\prime + g + h$. Then $\ell t_n^d = f^\prime t_n^d + (g+h) t_n^d$.
Applying Lemma~\ref{LM:ibp} to the term $f^\prime t_n^d$, we see that the lemma holds.
\end{proof}

Let $\prec$ be the purely
lexicographic ordering on the set of monomials in $t_1, t_2, \ldots, t_n$ with $t_1 \prec t_2 \prec \ldots \prec t_n$.
For all~$i$ with $0 \le i \le n-1$ and  $p \in K_i[t_{i+1}, \ldots, t_n]$ with $p \neq 0$, the head monomial of $p$, denoted by $\hm(p)$, is defined to be
the highest monomial in $t_{i+1}, \ldots, t_n$ appearing in $p$ with respect to $\prec$.
The head coefficient of $p$,
denoted by $\hc(p)$, is defined to be the coefficient of $\hm(p)$, which belongs to $K_i$.
The head coefficient of zero is set to be zero.
The monomial ordering $\prec$ induces a partial ordering on~$K_i[t_{i+1}, \ldots, t_n]$, which is also denoted by $\prec$.
{ \begin{ex}
Let $\xi = t_1 t_2 t_3$. Viewing $\xi$ as an element of $K_0[t_1, t_2, t_3]$, we have $\hm(\xi)=\xi$ and $\hc(\xi)=1$,
while, viewing $\xi$ as an element of~$K_1[t_2, t_3]$, we have $\hm(\xi)=t_2t_3$ and $\hc(\xi)=t_1$,
\end{ex}}

The next lemma will be used in Section~\ref{SECT:flat}. We present it below because it holds for primitive towers.
\begin{lemma} \label{LM:seq}
Let $n \ge 1$.
For  a polynomial $p   \in  K_{n-1}[t_n]$, there are polynomials $p_i   \in  K_i[t_{i+1}, \ldots, t_n]$  such that
$p \equiv \sum_{i=0}^{n-1} p_i \mod K_n^\prime$,  and that $\hc(p_i)$ is $t_i$-simple for all $i$ with $1 \le i \le n-1$.
Moreover, $\deg_{t_n}(p_i) \le \deg_{t_n}(p)$ for all $i$ with $0 \le i\le n-1$.
\end{lemma}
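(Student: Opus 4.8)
The plan is to prove the statement by a double induction: an outer induction on the height $n$ of the tower, and, for each fixed $n$, an inner induction on $D=\deg_{t_n}(p)$. The outer base case $n=1$ is immediate, since one may take $p_0=p\in K_0[t_1]$ and the head-coefficient condition is then vacuous. The guiding principle throughout is that, under the lexicographic order with $t_n$ largest, the head monomial of any element of $K_i[t_{i+1},\ldots,t_n]$ is governed first by the highest power of $t_n$; consequently the head coefficient of a level-$i$ piece is read off from its top $t_n$-degree slice, and multiplying a polynomial $P\in K_i[t_{i+1},\ldots,t_{n-1}]$ by a power of $t_n$ leaves its head coefficient unchanged, i.e.\ $\hc(t_n^DP)=\hc(P)$. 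This is exactly what will let the head-coefficient conditions survive the recursion.

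For the inductive step I fix $n\ge 2$ and assume the result for height $n-1$ and for all $t_n$-degrees smaller than $D$. I would isolate the top coefficient $c_D\in K_{n-1}$ of $p$ and apply Lemma~\ref{LM:priext} to write $c_D t_n^D\equiv (g_D+h_D)t_n^D \mod K_n^\prime + K_{n-1}[t_n]^{(D)}$, where $g_D$ is $t_{n-1}$-simple and $h_D\in K_{n-2}[t_{n-1}]$; the term $g_Dt_n^D$ is the intended head contribution to $p_{n-1}$. To handle $h_Dt_n^D$, I would invoke the outer induction hypothesis on $h_D\in K_{n-2}[t_{n-1}]$, obtaining $h_D\equiv \sum_{i=0}^{n-2}\tilde p_i \mod K_{n-1}^\prime$ with $\tilde p_i\in K_i[t_{i+1},\ldots,t_{n-1}]$ and $\hc(\tilde p_i)$ being $t_i$-simple. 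Multiplying through by $t_n^D$ and using Lemma~\ref{LM:ibp} (with $F=K_{n-1}$, $z=t_n$) to absorb the $K_{n-1}^\prime$-spillover gives $h_Dt_n^D\equiv \sum_{i=0}^{n-2}\tilde p_i t_n^D \mod K_n^\prime + K_{n-1}[t_n]^{(D)}$. Collecting everything, $p$ is congruent modulo $K_n^\prime$ to $g_Dt_n^D+\sum_{i=0}^{n-2}\tilde p_i t_n^D$ plus a remainder $q\in K_{n-1}[t_n]$ of $t_n$-degree $<D$, where $q$ absorbs both the degree-$(<D)$ part of $p$ and all the $K_{n-1}[t_n]^{(D)}$-terms produced above. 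The inner induction hypothesis applied to $q$ then yields $q\equiv \sum_{i=0}^{n-1}\bar p_i \mod K_n^\prime$ with $\bar p_i\in K_i[t_{i+1},\ldots,t_n]$, $\hc(\bar p_i)$ being $t_i$-simple, and $\deg_{t_n}(\bar p_i)<D$.

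Finally I would assemble $p_{n-1}=g_Dt_n^D+\bar p_{n-1}$ and $p_i=\tilde p_i t_n^D+\bar p_i$ for $0\le i\le n-2$, and verify the three requirements. Membership $p_i\in K_i[t_{i+1},\ldots,t_n]$ is clear since $\tilde p_i\in K_i[t_{i+1},\ldots,t_{n-1}]$. Because every $\bar p_i$ has $t_n$-degree $<D$ while $\tilde p_i t_n^D$ (resp.\ $g_Dt_n^D$) has degree exactly $D$ when nonzero, the head monomial of $p_i$ comes from its $t_n^D$-slice, so $\hc(p_i)=\hc(\tilde p_i)$ (resp.\ $g_D$), which is $t_i$-simple (resp.\ $t_{n-1}$-simple); when that slice vanishes, $\hc(p_i)=\hc(\bar p_i)$ is $t_i$-simple by the inner hypothesis, and the zero polynomial is $t_i$-simple. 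The degree bound $\deg_{t_n}(p_i)\le D=\deg_{t_n}(p)$ follows since every summand has $t_n$-degree at most $D$. I expect the main obstacle to be the bookkeeping of the lower-degree spillover: the integration-by-parts step of Lemma~\ref{LM:ibp}, used to pass from $K_{n-1}^\prime$ to $K_n^\prime$, necessarily injects terms of strictly smaller $t_n$-degree, so the coefficients cannot be treated independently and one is forced into the inner induction on $D$. Checking that this spillover, once folded into $q$, spoils neither the degree bound nor the head-coefficient conditions is the delicate point, and it is resolved precisely by the observation that head coefficients are read off from the top $t_n$-degree slice.
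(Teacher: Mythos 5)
Your proof is correct and takes essentially the same route as the paper's: the identical double induction (outer on $n$, inner on the $t_n$-degree), with Lemma~\ref{LM:priext} applied to the leading coefficient, the outer hypothesis applied to the resulting $h_D\in K_{n-2}[t_{n-1}]$, Lemma~\ref{LM:ibp} absorbing the derivative spillover, and the inner hypothesis handling the degree-$(<D)$ remainder. Your head-coefficient bookkeeping (reading $\hc(p_i)$ off the top $t_n$-slice when it is nonzero, and off $\bar p_i$ otherwise) is exactly the paper's concluding observation, so there is nothing to add.
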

\begin{proof}
 We proceed by induction on $n$.
If $n=1$, then set $p_0 = p$, because there is no requirement on $\hc(p_0)$.
Assume that $n > 1$ and that the lemma holds for $n-1$.

Let $p \in K_{n-1}[t_n]$ and $d = \deg_{t_n}(p)$.
By Lemma~\ref{LM:priext},
$$p \equiv (g+h)t_n^d \mod K_n^\prime + K_{n-1}[t_n]^{(d)},$$
where $g \in K_{n-1}$ is $t_{n-1}$-simple and $h \in K_{n-2}[t_{n-1}]$.
Then there exist $h_j \in K_j[t_{j+1}, \ldots, t_{n-1}]$ such that $h = \sum_{j=0}^{n-2} h_j + u^\prime$ for some $u$ in $K_{n-1}$ and
$\hc(h_j)$ is $t_j$-simple for all $j$ when $1 \le j \le n-2$
by the induction hypothesis.
Furthermore, we set  $h_{n-1} = g$. By Lemma~\ref{LM:ibp},
\begin{equation} \label{EQ:seq}
 p \equiv \sum_{j=0}^{n-1} h_j t_n^d \mod K_n^\prime + K_{n-1}[t_n]^{(d)}.
\end{equation}
We need to argue inductively on $d$.
If $d=0$,  then it is sufficient to set $p_{j}=h_j$ for all $j$ with $0 \le j \le n-1$, as $K_{n-1}[t_n]^{(0)} = \{0\}.$ 
Assume that $d>0$ and that the lemma holds for all polynomials in~$K_{n-1}[t_n]^{(d)}$.
By~\eqref{EQ:seq} and the induction hypothesis on $d$, we have
$$p \equiv \sum_{j=0}^{n-1} h_j t_n^d + \sum_{j=0}^{n-1}\tilde p_j\mod K_n^\prime,$$
where $\tilde p_j$ is in $K_j[t_{j+1}, \ldots, t_n]$, $\hc(\tilde p_j)$ is $t_j$-simple when $j \ge 1$, and $\deg_{t_n}(\tilde p_j) < d$.
Set $p_j = h_j t_n^d + \tilde p_j$. Then $p \equiv \sum_{j=0}^{n-1} p_j \mod K_n^\prime$.
Since $\hc(p_{j})$ is  $\hc(h_j)$ if $h_j \neq 0$ and $\hc(p_j)$ is  $\hc(\tilde p_j)$ if $h_j = 0$, the requirements on each $\hc(p_j)$ with $j \ge 1$ is fulfilled. The induction on $d$ is completed, and so is the induction on $n$.
%
%
\end{proof}
\begin{defn} \label{DEF:tower}
The tower given in~\eqref{EQ:ext} is said to be {\em straight}
if~$\hp_{t_{i-1}}(t_i^\prime) \neq 0$ for all $i$ with $2 \le i \le n$.
The tower is said to be {\em flat} if~$t_i^\prime \in K_0$ for all $i$ with $1 \le i \le n$.
\end{defn}
\begin{ex} \label{EX:tower}
Let $K_0=\bC(x)$ with the usual derivation in $x$.
Let
\[ \text{$\log(x) = \int x^{-1} dx$ \, \, and\, \,  $\Li(x) = \int \log(x)^{-1} dx$}. \]
Then the tower
$$K_0 \subset K_0(\log(x)) \subset K_0(\log(x), \Li(x))$$
is straight, while the tower
$$K_0 \subset K_0(\log(x)) \subset K_0(\log(x), \log(x+1))$$
is flat.
They contain no new constants by Theorem 5.1.1 in~\cite{BronsteinBook}.
\end{ex}
In this paper, we consider additive decompositions for elements in either straight or flat towers
with $K_0=C(t_0)$.
\begin{lemma}  \label{LM:rational}
{Let the tower~\eqref{EQ:ext} be primitive. Assume that~$K_0$ is equal to $C(t_0)$ with the usual derivation in $t_0$.}
Then $\hp_{t_0}(t_1^\prime)$ is nonzero. Moreover,
$\hp_{t_0}(t_i^\prime)$ is nonzero for all $i$ with $2 \le i \le n$ if~\eqref{EQ:ext} is flat.
\end{lemma}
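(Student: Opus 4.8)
The plan is to reduce both assertions to a single observation about the base field: when $K_0 = C(t_0)$ carries the usual derivation $t_0' = 1$, the kernel of $\hp_{t_0}$ collapses to $K_0'$. Recall that this kernel was identified above as $C(t_0)' + C[t_0]$. But in characteristic zero with $t_0' = 1$, every polynomial $p = \sum_k c_k t_0^k \in C[t_0]$ is the derivative of $\sum_k \frac{c_k}{k+1} t_0^{k+1} \in C[t_0] \subseteq K_0$, so $C[t_0] \subseteq K_0'$ and hence $\ker \hp_{t_0} = K_0' + C[t_0] = K_0'$. Consequently, for any $f \in K_0$ one has $\hp_{t_0}(f) = 0$ if and only if $f$ is integrable in $K_0$. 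This equivalence is the engine of the whole argument.

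For the first assertion I would argue by contradiction. Since the tower is primitive, $t_1' \in K_0$, so $\hp_{t_0}(t_1')$ is well defined. If it vanished, the equivalence above would give $t_1' = b'$ for some $b \in K_0$; then $(t_1 - b)' = 0$ forces $t_1 - b \in C_{K_1} = C$ by the no-new-constants property of primitive monomials (see Remark~\ref{RE:indep}), whence $t_1 \in K_0$. This contradicts the transcendence of the primitive monomial $t_1$ over $K_0$, establishing $\hp_{t_0}(t_1') \neq 0$.

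For the \emph{moreover} part I would first invoke flatness to guarantee that $\hp_{t_0}(t_i')$ even makes sense: flatness means $t_i' \in K_0$ for every $i$, which is precisely why the claim for $i \ge 2$ requires the flat hypothesis, whereas $i = 1$ did not (primitivity alone already places $t_1'$ in $K_0$). Granting $t_i' \in K_0$, the contradiction argument is verbatim the same: $\hp_{t_0}(t_i') = 0$ would yield $t_i' = b'$ with $b \in K_0$, so $t_i - b \in C_{K_i} = C$ and $t_i \in K_0 \subseteq K_{i-1}$, contradicting that $t_i$ is transcendental over $K_{i-1}$.

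The argument is short, and I expect the only delicate points to be conceptual rather than computational. The first is the bookkeeping that makes $\hp_{t_0}(t_i')$ legitimate, namely confirming $t_i' \in K_0$; this is exactly where flatness enters and marks the genuine difference between the two parts. The second is the recognition that over $C(t_0)$ the polynomial part of the Hermite kernel is absorbed into derivatives, so that non-vanishing of the Hermitian part is equivalent to non-integrability in $K_0$. Once these are in place, the passage from $(t_i - b)' = 0$ to $t_i \in K_0$ via $C_{K_i} = C$ is immediate, and transcendence of the monomials closes both cases.
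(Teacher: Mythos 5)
Your proof is correct and follows essentially the same route as the paper: the paper applies the rational reduction to write $t_1^\prime = u^\prime + v$ with $v$ being $t_0$-simple (which is exactly $\hp_{t_0}(t_1^\prime)$, since polynomials over $C(t_0)$ are integrable), and concludes $v \neq 0$ from $C = C_{K_n}$ together with the transcendence of $t_1$ — precisely your kernel-collapse plus no-new-constants argument, with the flat case handled verbatim the same way in both.
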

\begin{proof}
By the rational reduction,
 $t_1^\prime {=} u^\prime {+} v$ for some $u, v {\in} K_0$ with $v$ being $t_0$-simple.
 Then $v$ is nonzero, since $C = C_{K_n}$.
 The second assertion can be proved similarly.
\end{proof}

\begin{ex} \label{EX:log}
{  Let $K_{-1}=C$, $K_0=K_{-1}(t_0)$ with the usual derivation in $t_0$, and each $t_i$ be logarithmic in~\eqref{EQ:ext} for all~$i$ with $1 \le i \le n$.
By Lemma~\ref{LM:ld},  $t_i^\prime=\hp_{t_{i-1}}(t_i^\prime) + r_i$, where $r_i \in K_{n-2}$ for all $i$ with $1 \le i \le n$.
The tower is straight if and only if $t_i^\prime \in K_{i-1} \setminus K_{i-2}$
for all~$i$ with $2 \le i \le n$. In addition, $t_1^\prime \in K_0 \setminus K_{-1}$ as $K_0=K_{-1}(t_0)$.}
\end{ex}

\section{Straight towers} \label{SECT:str}

In this section,
we assume that the tower~\eqref{EQ:ext} is straight and  that $K_0 = C(t_0)$
with the usual derivation with respect to~$t_0$. The subfield $C$ of constants is denoted by $K_{-1}$ in recursive definitions and  induction proofs to be carried out.

{  Our idea is reducing a polynomial in $K_{n-1}[t_n]$ to another  of lower degree via integration by parts, whenever it is possible.
The notion of $t_n$-rigid elements describes $r \in K_{n-1}$ such that $r t_{n}^d$ cannot be congruent to
a polynomial of degree lower than~$d$ modulo $K_{n}^\prime$.}
\begin{defn} \label{DEF:rigid}
An element $r \in K_{-1}$ is said to be {\em $t_0$-rigid} if $r=0$.
Let $r \in K_{n-1}$ with
$$f = \fp_{t_{n-1}}(r) \quad \text{and} \quad p = \pp_{t_{n-1}}(r).$$
We say that $r$ is {\em $t_n$-rigid} if $f$ is $t_{n-1}$-simple,
$f \ne c \hp_{t_{n-1}}(t_n^\prime)$ for any nonzero $c \in C$, and $\lc_{t_{n-1}}(p)$ is $t_{n-1}$-rigid.
\end{defn}
Note that zero is $t_n$-rigid, because $\hp_{t_{n-1}}(t_n^\prime)$ is nonzero.
\begin{ex} \label{EX:straight1}
Let $t_0=x$, $t_1=\log(x)$ and $t_2 = \Li(x)$. Let
\[ \ell_1 = \frac{1}{x+k_1} \quad \text{and} \quad \ell_2 = \frac{1}{t_1+k_2} + \ell_1 t_1^2 + x t_1 + x^2. \]
Then $\ell_1$ is $t_1$-rigid if $k_1 \neq 0$ and $\ell_2$ is $t_2$-rigid if $k_1 k_2 \neq 0$.
\end{ex}
The next lemma, together with Lemma~\ref{LM:lc}, reveals that a nonzero polynomial $p$ in $K_{n-1}[t_n]$ with a $t_n$-rigid leading coefficient
has no antiderivative in $K_n$.
\begin{lemma} \label{LM:rigid}
Let $r \in K_{n-1}$ be $t_n$-rigid. If
\begin{equation} \label{EQ:key}
r \equiv c t_n^\prime~\text{mod}~K_{n-1}^\prime
\end{equation}
for some $c \in C$, then both $r$ and $c$ are zero.
\end{lemma}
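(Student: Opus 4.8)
The plan is to prove the statement by induction on $n$, mirroring the recursive structure of Definition~\ref{DEF:rigid}. The base case $n=0$ is immediate: here $r \in K_{-1} = C$ is $t_0$-rigid, so $r = 0$ by definition, and since $t_0' \neq 0$ while $K_{-1}' = \{0\}$, the congruence~\eqref{EQ:key} reads $0 = c\,t_0'$ and forces $c = 0$. For the inductive step I assume $n \ge 1$ and that the lemma holds with $n$ replaced by $n-1$. Writing $r = f + p$ with $f = \fp_{t_{n-1}}(r)$ and $p = \pp_{t_{n-1}}(r) \in K_{n-2}[t_{n-1}]$, the strategy is to kill the fractional part and the polynomial part in two separate steps, each governed by one clause of the rigidity hypothesis.

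First I would apply $\hp_{t_{n-1}}$ to both sides of~\eqref{EQ:key}. The kernel of $\hp_{t_{n-1}}$ is $K_{n-1}' + K_{n-2}[t_{n-1}]$, so it contains both the ambient derivatives $K_{n-1}'$ and the polynomial part $p$; meanwhile $f$ is $t_{n-1}$-simple by rigidity and is therefore fixed by $\hp_{t_{n-1}}$. Using $C$-linearity of $\hp_{t_{n-1}}$, the congruence collapses to the exact identity $f = c\,\hp_{t_{n-1}}(t_n')$. Clause two of rigidity states precisely that $f$ is not a nonzero multiple of $\hp_{t_{n-1}}(t_n')$; together with the nonvanishing of $\hp_{t_{n-1}}(t_n')$ (from straightness when $n \ge 2$, and from Lemma~\ref{LM:rational} when $n = 1$), this forces $c = 0$ and hence $f = 0$.

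It then remains to dispose of the polynomial part. With $c = 0$ and $f = 0$, the element $r = p$ lies in $K_{n-1}' = K_{n-2}(t_{n-1})'$. Applying Lemma~\ref{LM:lc} with $F = K_{n-2}$ and $z = t_{n-1}$ produces a constant $\tilde c \in C$ with $\lc_{t_{n-1}}(p) \equiv \tilde c\,t_{n-1}' \bmod K_{n-2}'$. By clause three of rigidity, $\lc_{t_{n-1}}(p)$ is $t_{n-1}$-rigid, so the induction hypothesis applies to it and yields $\lc_{t_{n-1}}(p) = 0$ together with $\tilde c = 0$. A vanishing leading coefficient means $p = 0$, whence $r = 0$, completing the induction.

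Rather than a genuine obstacle, the main point to get right is that the two clauses of the rigidity definition are exactly calibrated to this two-step reduction: the map $\hp_{t_{n-1}}$ simultaneously annihilates $K_{n-1}'$ and the entire polynomial part, isolating $f$ so that clause two can act, while clause three feeds the leading coefficient of the surviving polynomial part into the inductive hypothesis through Lemma~\ref{LM:lc}. I would be careful to record that $\hp_{t_{n-1}}(f) = f$ for the $t_{n-1}$-simple $f$ and that $\hp_{t_{n-1}}(p) = 0$, so that no fractional contribution is lost in passing to the identity $f = c\,\hp_{t_{n-1}}(t_n')$, and to flag explicitly the place where the nonvanishing of $\hp_{t_{n-1}}(t_n')$ is invoked.
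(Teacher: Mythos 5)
Your proof is correct and follows essentially the same route as the paper's: induction on $n$, applying $\hp_{t_{n-1}}$ to the congruence to isolate the fractional part (which clause two of Definition~\ref{DEF:rigid} then kills, forcing $c=0$ and $f=0$), followed by Lemma~\ref{LM:lc} and the induction hypothesis applied to $\lc_{t_{n-1}}(p)$ via clause three. The only cosmetic difference is your appeal to the nonvanishing of $\hp_{t_{n-1}}(t_n^\prime)$, which is not actually needed --- clause two alone already forces $c=0$, and then $f = c\,\hp_{t_{n-1}}(t_n^\prime) = 0$ follows.
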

\begin{proof}
We proceed by induction on $n$.
If $n=0$, then $r = 0$ by Definition~\ref{DEF:rigid}. Thus, $c t_0^\prime \equiv 0 \mod K_{-1}^\prime$.
Consequently,  $c=0$ because $K_{-1}^\prime = \{0\}$ and $t_0^\prime=1$.

Assume that $n>0$ and that the lemma holds for $n-1$.
Set  $f = \fp_{t_{n-1}}(r)$.
Then $f=\hp_{t_{n-1}}(r)$,  since $f$ is $t_{n-1}$-simple by Definition~\ref{DEF:rigid}.
Applying the map $\hp_{t_{n-1}}$ to~\eqref{EQ:key}, we have $f = c \hp_{t_{n-1}}(t_n^\prime)$ by Lemma~\ref{LM:hp}.
Hence, $c=0$ and $f=0$ by Definition~\ref{DEF:rigid}.

Set $p = \pp_{t_{n-1}}(r)$.
Then \eqref{EQ:key} becomes $p \equiv 0 \mod K_{n-1}^\prime$, which, together with Lemma~\ref{LM:lc}, implies that $\lc_{t_{n-1}}(p) \equiv \tilde c t_{n-1}^\prime \mod K_{n-2}^\prime$
for some $\tilde c \in C$. It follows from the induction hypothesis that $\lc_{t_{n-1}}(p)$ is zero, and so is $p$.
Thus, $r$ is zero.
\end{proof}
In $K_{n-1}[t_n]$, we define a class of polynomials  that have no antiderivatives in $K_n$.
\begin{defn} \label{DEF:spoly} {  For $n \ge 0$,
a} polynomial in $K_{n-1}[t_n]$ is said to be {\em $t_n$-straight} if its leading coefficient is $t_n$-rigid.
\end{defn}
\begin{prop} \label{PROP:spoly}
Let $p \in K_{n-1}[t_n]$ be a $t_n$-straight polynomial.
Then $p=0$ if $p \in  K_n^\prime$.
\end{prop}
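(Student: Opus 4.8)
The plan is to prove the contrapositive statement by induction on $n$, showing that if a $t_n$-straight polynomial $p$ lies in $K_n^\prime$, then $p$ must be zero. The key structural fact to exploit is that for a $t_n$-straight polynomial, its leading coefficient $\lc_{t_n}(p)$ is $t_n$-rigid by Definition~\ref{DEF:spoly}, and Lemma~\ref{LM:rigid} tells us that a $t_n$-rigid element cannot satisfy $r \equiv c t_n^\prime \bmod K_{n-1}^\prime$ unless both $r$ and $c$ vanish. The bridge between these is Lemma~\ref{LM:lc}, which says that any polynomial in $K_{n-1}[t_n]$ that is a derivative in $K_n(=K_{n-1}(t_n))$ must have its leading coefficient congruent to $c\, t_n^\prime$ modulo $K_{n-1}^\prime$ for some constant $c \in C$.

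First I would set $d = \deg_{t_n}(p)$ and $\ell = \lc_{t_n}(p)$, so that $\ell$ is $t_n$-rigid by hypothesis. Assuming $p \in K_n^\prime$, I would write $p = r^\prime$ for some $r \in K_n$; by Theorem~4.4.2(i) in~\cite{BronsteinBook} we have $r \in K_{n-1}[t_n]$, so $r$ is genuinely a polynomial. Applying Lemma~\ref{LM:lc} directly gives a constant $c \in C$ with $\ell \equiv c\, t_n^\prime \bmod K_{n-1}^\prime$. Now I invoke Lemma~\ref{LM:rigid} with $r = \ell$: since $\ell$ is $t_n$-rigid and satisfies exactly the congruence~\eqref{EQ:key}, both $\ell$ and $c$ are forced to be zero. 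But $\ell = \lc_{t_n}(p)$, and a polynomial whose leading coefficient is zero either has strictly smaller degree or is itself zero.

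This is where the induction on $d$ (or equivalently on $\deg_{t_n}(p)$) comes in. Having shown $\lc_{t_n}(p) = 0$, the polynomial $p$ is really of degree $d-1$ (if it is nonzero at all), and its new leading coefficient is still $t_n$-rigid in the appropriate sense—indeed, a $t_n$-straight polynomial remains $t_n$-straight after dropping a zero leading term, since the rigidity condition constrains the top coefficient and the argument repeats. I would therefore either frame the whole argument as induction on $\deg_{t_n}(p)$ with base case $p \in K_{n-1}$ (handled by the $n-1$ level, i.e.\ by the rigidity of constants via Definition~\ref{DEF:rigid}), or simply note that peeling off successive leading coefficients forces every coefficient of $p$ to vanish.

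The main obstacle I anticipate is the careful handling of the reduction step: after concluding $\lc_{t_n}(p) = 0$, one must verify that the truncated polynomial is \emph{still} a derivative in $K_n$ so that Lemma~\ref{LM:lc} can be reapplied. Since $\ell = 0$ means the top term $\ell t_n^d$ was absent, the remaining polynomial $p - \ell t_n^d = p$ is unchanged and still equals $r^\prime$, so it remains in $K_n^\prime$ with strictly lower degree; this keeps the induction clean. The only subtlety is ensuring Definition~\ref{DEF:rigid} genuinely transmits rigidity to lower-degree truncations, but because $t_n$-rigidity of $\ell \in K_{n-1}$ is a property of $\ell$ as a whole (involving its fractional and polynomial parts over $t_{n-1}$) and Lemma~\ref{LM:rigid} already delivers $\ell = 0$ outright, the degree actually drops all the way without needing to re-examine intermediate coefficients—so the cleanest route is just to conclude $p = 0$ once its leading coefficient is shown to vanish, as a $t_n$-straight polynomial of degree $d$ with vanishing leading coefficient contradicts $d$ being the true degree unless $p \equiv 0$.
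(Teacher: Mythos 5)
Your proof is correct and takes essentially the same route as the paper's: apply Lemma~\ref{LM:lc} to obtain $\lc_{t_n}(p) \equiv c\,t_n^\prime \bmod K_{n-1}^\prime$, then Lemma~\ref{LM:rigid} to force $\lc_{t_n}(p)=0$ (and $c=0$), whence $p=0$. The detour through induction on the degree is unnecessary, as you yourself conclude at the end---a nonzero polynomial cannot have zero leading coefficient---and the paper's proof is precisely this two-step argument (with the $n=0$ case dispatched directly from Definition~\ref{DEF:rigid}).
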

\begin{proof}
If $n=0$, then $p=0$ by Definition~\ref{DEF:rigid}.
Otherwise, $\lc_{t_n}(p) \equiv c t_n^\prime \mod K_{n-1}^\prime$ for some $c \in C$ by Lemma~\ref{LM:lc}.
Then $\lc_{t_n}(p)=0$ by Lemma~\ref{LM:rigid}. Consequently, $p=0$.
\end{proof}

Next, we reduce a polynomial to a $t_n$-straight one.
\begin{lemma} \label{LM:spoly}
For $p \in K_{n-1}[t_n]$, there exists a $t_n$-straight polynomial $q \in K_{n-1}[t_n]$ with $\deg_{t_n}(q) \le \deg_{t_n}(p)$
such that $p \equiv q \mod K_n^\prime.$
\end{lemma}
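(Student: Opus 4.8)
The plan is to argue by a double induction: an outer induction on the level~$n$, and for each fixed~$n$ an inner induction on the $t_n$-degree $d=\deg_{t_n}(p)$, in each step reducing the leading term $\lc_{t_n}(p)\,t_n^d$ while pushing every by-product into strictly lower $t_n$-degree. The base case $n=0$ is immediate: a polynomial $p\in K_{-1}[t_0]=C[t_0]$ has a polynomial antiderivative obtained termwise (since $t_0^\prime=1$), so $p\in K_0^\prime$ and one takes $q=0$, which is $t_0$-straight because $0$ is $t_0$-rigid by Definition~\ref{DEF:rigid}.

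For the inductive step I first apply Lemma~\ref{LM:priext} to $\ell:=\lc_{t_n}(p)$ and the exponent~$d$: modulo $K_n^\prime + K_{n-1}[t_n]^{(d)}$ the term $\ell t_n^d$ becomes $(g+h)t_n^d$, where $g\in K_{n-1}$ is $t_{n-1}$-simple and $h\in K_{n-2}[t_{n-1}]$. Thus $g$ is the fractional part and $h$ the polynomial part of the new coefficient, and by Definition~\ref{DEF:rigid} I must arrange two things: that $g$ is not a nonzero $C$-multiple of $\hp_{t_{n-1}}(t_n^\prime)$, and that $\lc_{t_{n-1}}(h)$ is $t_{n-1}$-rigid. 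Here $\hp_{t_{n-1}}(t_n^\prime)\neq 0$ throughout, by straightness for $n\ge 2$ and by Lemma~\ref{LM:rational} for $n=1$, so the first condition is meaningful.

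The fractional part is handled by a trade-off. Writing $t_n^\prime=\hp_{t_{n-1}}(t_n^\prime)+a^\prime+b$ with $a\in K_{n-1}$ and $b\in K_{n-2}[t_{n-1}]$ (using that the kernel of $\hp_{t_{n-1}}$ equals $K_{n-1}^\prime+K_{n-2}[t_{n-1}]$), multiplying by $t_n^d$, and combining $c\,t_n^\prime t_n^d=\bigl(\tfrac{c}{d+1}t_n^{d+1}\bigr)^\prime$ with Lemma~\ref{LM:ibp} applied to $c\,a^\prime t_n^d$, I get
\[ c\,\hp_{t_{n-1}}(t_n^\prime)\,t_n^d \equiv -c\,b\,t_n^d \mod K_n^\prime + K_{n-1}[t_n]^{(d)}. \]
So if $g=c\,\hp_{t_{n-1}}(t_n^\prime)$ for some nonzero $c\in C$, I replace $g$ by $-cb\in K_{n-2}[t_{n-1}]$, turning the fractional part into $0$; otherwise I keep $g$. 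In either case the fractional part now meets the first condition, and the polynomial part is some $H\in K_{n-2}[t_{n-1}]$. For~$H$ I invoke the outer induction hypothesis at level~$n-1$ to write $H=\tilde H+w^\prime$ with $w\in K_{n-1}$ and $\tilde H\in K_{n-2}[t_{n-1}]$ a $t_{n-1}$-straight polynomial of no larger degree; multiplying by $t_n^d$ and applying Lemma~\ref{LM:ibp} to $w^\prime t_n^d$ gives $H t_n^d\equiv\tilde H t_n^d$ modulo $K_n^\prime+K_{n-1}[t_n]^{(d)}$. Since $\lc_{t_{n-1}}(\tilde H)$ is $t_{n-1}$-rigid, the resulting coefficient $L$ (which is $g+\tilde H$ or $\tilde H$) is $t_n$-rigid. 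Collecting all by-products, which lie in $K_{n-1}[t_n]^{(d)}$, into a remainder $P_{<d}$, I obtain $p\equiv L\,t_n^d+P_{<d}\mod K_n^\prime$. If $L\neq 0$ then $q:=L\,t_n^d+P_{<d}$ already has $t_n$-rigid leading coefficient and $t_n$-degree $d$, hence is $t_n$-straight; if $L=0$ the degree has dropped and I finish by the inner induction hypothesis on~$d$ applied to $P_{<d}$. In both cases $\deg_{t_n}(q)\le d$.

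I expect the main obstacle to be the fractional-part trade-off: one must recognize that the \emph{only} forbidden $t_{n-1}$-simple fractional part is a $C$-multiple of $\hp_{t_{n-1}}(t_n^\prime)$, and that precisely this part can be exchanged, via the displayed congruence, for a lower-level polynomial contribution \emph{without} raising the $t_n$-degree. This is exactly what dictates the shape of Definition~\ref{DEF:rigid}. A secondary point needing care is the bookkeeping of the nested recursion: the polynomial part is reduced by the level-$(n-1)$ case while the lower-degree remainder is reduced by the degree-$(d{-}1)$ case at level~$n$, so one must check that this double induction is well founded and that neither reduction ever increases $\deg_{t_n}$.
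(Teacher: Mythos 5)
Your proposal is correct and takes essentially the same route as the paper's own proof: both run a double induction (outer on $n$, inner on $d$), Hermite-reduce the leading coefficient, split into cases according to whether the fractional part is a nonzero constant multiple of $\hp_{t_{n-1}}(t_n^\prime)$, trade that part away using $c\,t_n^\prime t_n^d=\bigl(\tfrac{c}{d+1}t_n^{d+1}\bigr)^\prime$ together with the Hermite decomposition of $t_n^\prime$, and reduce the polynomial part by the level-$(n-1)$ hypothesis. The only difference is presentational — the paper first builds the coefficient identity $\ell=(ct_n+a)^\prime+r$ and then multiplies by $t_n^d$, whereas you work modulo $K_n^\prime+K_{n-1}[t_n]^{(d)}$ throughout.
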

\begin{proof}
If $p=0$, then we choose $q=0$. Assume that $p$ is nonzero. We
proceed by induction on $n$.

If $n=0$, then $p \equiv 0 \mod K_0^\prime$, as every element of~$K_{-1}[t_0]$
has an antiderivative in the same ring.

Assume that $n>0$ and that the lemma holds for $n-1$.
Let $p \in K_{n-1}[t_n]$ with degree $d$ and leading coefficient $\ell$.
By Algorithm {\tt HermiteReduce}, there are $f, g, u, v \in K_{n-1}$ with $g, v$ being $t_{n-1}$-simple, and $h, w \in K_{n-2}[t_{n-1}]$
such that
\[ \ell = f^\prime + g + h \quad \text{and} \quad t_n^\prime = u^\prime + v + w. \]

We are going to concoct a new expression for $\ell$ such that
\begin{equation} \label{EQ:trick}
\ell = (c t_n + a)^\prime + r,
\end{equation}
where  $c \in C$, $a \in K_{n-1}$ and $r \in K_{n-1}$ is $t_n$-rigid.  The expression helps us decrease degrees.
To this end,  we consider two cases.

\smallskip \noindent
{\em Case 1.}
Assume $g \neq c v$ for any $c \in C \setminus \{0\}$.
By the induction hypothesis, there exists a $t_{n-1}$-straight polynomial $\tilde h \in  K_{n-2}[t_{n-1}]$
such that $h= b^\prime + \tilde h$ for some $b \in K_{n-1}$.
Then $\ell =  a^\prime + r$, where $a = f + b$ and $r = g + \tilde h$.

\smallskip \noindent
{\em Case 2.}
Assume $g = c v$ for some $c \in C \setminus \{0\}$.
By the induction hypothesis, there exists a $t_{n-1}$-straight polynomial $\tilde h \in  K_{n-2}[t_{n-1}]$
such that $h-c w = b^\prime + \tilde h$ for some $b \in K_{n-1}$.
Then $\ell =  (c t_n + a)^\prime + r$, where $a = f - c u + b$ and $r = \tilde h$.

In both cases, $r$ is $t_n$-rigid by Definition~\ref{DEF:rigid}.

If $d=0$, then $p = \ell$.  By~\eqref{EQ:trick}, we have  $p \equiv r \mod K_n^\prime$.
Let $q = r$, which is $t_n$-straight by Definition~\ref{DEF:spoly}.

Assume that $d >0$ and each polynomial in~$K_{n-1}[t_n]^{(d)}$ is congruent to a $t_n$-straight polynomial modulo~$K_n^\prime$.
By~\eqref{EQ:trick},  Lemma~\ref{LM:ibp} and the equality $c t_n^\prime t_n^d {=} \left(\frac{c}{d+1} t_n^{d+1}\right)^\prime$,
we have
$$p \equiv r t_n^d  + \tilde q \mod K_n^\prime$$
for some $\tilde q \in K_{n-1}[t_n]^{(d)}$. If $r \neq 0$, then set $q =  r t_n^d  + \tilde q.$
Otherwise, applying the  induction hypothesis on $d$ to $\tilde q$ yields a $t_n$-straight polynomial $q$
with $p \equiv q \mod K_n^\prime$.
\end{proof}
\begin{ex} \label{EX:straight2}
Consider the integral
\[ \int \log(x) \Li(x)^2 \, dx. \]
With the notation introduced in Example~\ref{EX:straight1}, we reduce the integrand $t_1t_2^2$.
We have that $\lc_{t_2}(t_1t_2^2)=t_1$.
Since $t_1$ is not $t_2$-rigid,
$t_1t_2^2$ can be reduced.
In fact, $t_1 t_2^2 = x^\prime t_1 t_2^2$. By Lemma~\ref{LM:ibp} and a straightforward calculation, we get
$$t_1 t_2^2 = \left( x t_1 t_2^2 - x t_2^2 - x^2 t_2 \right)^\prime + \frac{2x}{t_1} t_2 + \frac{x^2}{t_1}.$$
Since $2x/t_1$ is $t_2$-rigid, we have that $(2x/t_1)t_2 + (x^2/t_1)$ is $t_2$-straight.
Hence, $t_1 t_2^2$ has no antiderivative in $C(x, t_1, t_2)$
by Proposition~\ref{PROP:spoly}.
\end{ex}

Below is an additive decomposition in a straight tower.
\begin{thm} \label{TH:sadd}
For $f \in K_n$, the following assertions hold.
 \begin{enumerate}
 \item[(i)] There exist a $t_n$-simple element $g \in K_n$ and a $t_n$-straight polynomial $p\in K_{n-1}[t_n]$ such that
 \begin{equation} \label{EQ:sadd}
   f \equiv g + p \mod K_n^\prime.
 \end{equation}
 \item[(ii)] $f \in  K_n^\prime$ if and only if both $g$ and $p$ in~\eqref{EQ:sadd} are zero.
 \item[(iii)] If  $f \equiv \tilde g + \tilde p \mod K_n^\prime$, where $\tilde g \in K_n$ is a $t_n$-simple element
 and $\tilde p \in K_{n-1}[t_n]$, then $g = \tilde g$ and $\deg_{t_n}(p) \le \deg_{t_n}(\tilde p).$
 \end{enumerate}
\end{thm}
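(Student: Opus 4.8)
The plan is to establish the three assertions in sequence, using the machinery of Hermitian parts and $t_n$-straight polynomials already developed. For part~(i), I would first apply Algorithm {\tt HermiteReduce} to $f \in K_n = K_{n-1}(t_n)$ to write $f \equiv g + p_0 \mod K_n^\prime$, where $g = \hp_{t_n}(f)$ is the $t_n$-simple Hermitian part and $p_0 \in K_{n-1}[t_n]$ is a polynomial. Then I would invoke Lemma~\ref{LM:spoly} to replace $p_0$ by a $t_n$-straight polynomial $p$ with $\deg_{t_n}(p) \le \deg_{t_n}(p_0)$ satisfying $p_0 \equiv p \mod K_n^\prime$. Combining the two congruences yields~\eqref{EQ:sadd}, which proves existence.

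For part~(ii), the ``only if'' direction is the substantive one. Suppose $f \in K_n^\prime$; then from~\eqref{EQ:sadd} we get $g + p \in K_n^\prime$, so $g \equiv -p \mod K_n^\prime$. Since $p \in K_{n-1}[t_n]$, this says $g \in K_n^\prime + K_{n-1}[t_n]$, i.e.\ $g$ lies in the kernel of $\hp_{t_n}$ restricted appropriately; by Lemma~\ref{LM:hp} (applied with $F = K_{n-1}$, $z = t_n$) a $t_n$-simple element lying in $K_n^\prime + K_{n-1}[t_n]$ must be zero, so $g = 0$. This forces $p \in K_n^\prime$, and since $p$ is $t_n$-straight, Proposition~\ref{PROP:spoly} gives $p = 0$. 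The ``if'' direction is immediate: if $g = p = 0$ then $f \equiv 0 \mod K_n^\prime$, i.e.\ $f \in K_n^\prime$.

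For part~(iii), the uniqueness of $g$ follows by subtracting the two decompositions: from~\eqref{EQ:sadd} and $f \equiv \tilde g + \tilde p \mod K_n^\prime$ we get $g - \tilde g \equiv \tilde p - p \mod K_n^\prime$, so the $t_n$-simple element $g - \tilde g$ lies in $K_n^\prime + K_{n-1}[t_n]$; Lemma~\ref{LM:hp} again forces $g - \tilde g = 0$, hence $g = \tilde g$. It then remains to prove $\deg_{t_n}(p) \le \deg_{t_n}(\tilde p)$, where $\tilde p$ is an \emph{arbitrary} polynomial congruent to $p$ modulo $K_n^\prime$ (not necessarily $t_n$-straight). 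I would set $\delta = \deg_{t_n}(\tilde p)$ and argue by contradiction: if $\deg_{t_n}(p) > \delta$, then $p - \tilde p$ has degree $\deg_{t_n}(p)$ with leading coefficient $\lc_{t_n}(p)$, which is $t_n$-rigid, so $p - \tilde p$ is $t_n$-straight of positive degree; but $p - \tilde p \in K_n^\prime$, so Proposition~\ref{PROP:spoly} forces $\lc_{t_n}(p) = 0$, contradicting $\deg_{t_n}(p) > \delta \ge 0$ with nonzero leading coefficient.

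The main obstacle I anticipate is the degree-minimality claim in part~(iii), since there $\tilde p$ ranges over \emph{all} congruent polynomials rather than $t_n$-straight ones, so one cannot simply compare normal forms; the key is recognizing that the difference $p - \tilde p$ inherits the $t_n$-straight property from $p$ whenever $\deg_{t_n}(p)$ strictly exceeds $\deg_{t_n}(\tilde p)$ (because the top coefficient is then unaffected by $\tilde p$), which reduces the claim to Proposition~\ref{PROP:spoly}. One subtlety to verify carefully is that Lemma~\ref{LM:hp} applies verbatim with base field $K_{n-1}$ in place of $F$, which is legitimate since $t_n$ is a primitive monomial over $K_{n-1}$.
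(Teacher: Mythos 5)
Your proof is correct and follows essentially the same route as the paper: Algorithm {\tt HermiteReduce} plus Lemma~\ref{LM:spoly} for existence, Lemma~\ref{LM:hp} (equivalently, the map $\hp_{t_n}$) plus Proposition~\ref{PROP:spoly} for part~(ii), and the observation that $p-\tilde p$ inherits $t_n$-straightness when $\deg_{t_n}(\tilde p)<\deg_{t_n}(p)$ for part~(iii). The only cosmetic difference is that you invoke Lemma~\ref{LM:hp} directly where the paper phrases the same step via the Hermitian-part map, and your closing remark about the difference of $t_n$-simple elements being $t_n$-simple is a detail the paper also uses implicitly.
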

\begin{proof}
 (i) By Algorithm {\tt HermiteReduce},
 there exist a $t_n$-simple {  element} $g \in K_n$ and a polynomial $h\in K_{n-1}[t_n]$ such that
$$f \equiv g + h \mod K_n^\prime.$$
 By Lemma~\ref{LM:spoly}, $h$ can be replaced by a $t_n$-straight polynomial $p$.

 (ii) Since $f \in K_n^\prime$,  the congruence \eqref{EQ:sadd} becomes $g + p \equiv 0 \mod K_n^\prime$.
 Applying the map $\hp_{t_n}$ to the new congruence, we have $g=0$, because $g = \hp_{t_n}(g+p)$.
 Thus, $p=0$ by Proposition~\ref{PROP:spoly}.

 (iii) Since $g - \tilde g \equiv  \tilde p - p \mod K_n^\prime$,  we have $g = \tilde g$
 by Lemma~\ref{LM:hp}. If $\deg_{t_n}(\tilde p) < \deg_{t_n}(p)$, then $p-\tilde p$ is $t_{n}$-straight,
 because $\lc_{t_n}(p-\tilde p)$ equals  $\lc_{t_n}(p)$.
 So $p-\tilde p=0$ by Proposition~\ref{PROP:spoly}, a contradiction.
\end{proof}
\begin{ex} \label{EX:straight3}
Consider the integral
$$\int \frac{1}{\Li(x)^2} + \log(x) \Li(x)^2 \, dx.$$
The integrand is $f := 1/t_2^2 + t_1 t_2^2$, in which the notation is introduced in Example~\ref{EX:straight1}.
By Algorithm {\tt HermiteReduce}, we have
 $$f = (-t_1/t_2)^\prime + x/t_2 + t_1 t_2^2.$$ By Theorem~\ref{TH:sadd} and
Example~\ref{EX:straight2}, $f$ has no antiderivative in~$C(x, t_1, t_2)$.
\end{ex}

 \section{Flat towers} \label{SECT:flat}
 In this section, we let the tower~\eqref{EQ:ext} be flat. The ground field $K_0$ will be specialized to $C(t_0)$ later in this section.
 We are not able to fully carry out the same idea in Section~\ref{SECT:str},
 because $\hp_{t_{i-1}}(t_i^\prime) = 0$ for all $i=2, \ldots, n$. This spoils Lemma~\ref{LM:rigid} and Proposition~\ref{PROP:spoly}.
 So we need to study integrability in a flat tower differently.

  This section is divided into two parts. First, we extend Lemma~\ref{LM:ibp}  to
 the differential ring $K_0[t_1, \ldots, t_n]$. Second, we present a flat counterpart of the results in Section~\ref{SECT:str}.

 \subsection{Scales}
Let us denote $K_0[t_1, \ldots, t_n]$ by $R_n$.
For a monomial $\xi$ in $t_1$, \ldots, $t_n$,
the $C$-linear subspace $\{ p \in R_n \mid p \prec \xi \}$
is denoted by $R_n^{(\xi)}$. The notion of scales is motivated by
the following example.
\begin{ex} \label{EX:scale1}
Let $n=2$, and $\xi_0=1, \xi_1=t_1$ and $\xi_2 = t_2$.
And let $\ell = t_1^\prime + t_2^\prime$.
Using integration by parts, we find three congruences
$$\ell \xi_0 \equiv 0~\operatorname{mod}~K_2^\prime, \quad  \ell \xi_1 \equiv -t_1^\prime t_2 ~\operatorname{mod}~K_2^\prime, \quad
\,\, \ell \xi_2 \equiv - t_2^\prime t_1~\operatorname{mod}~K_2^\prime.$$
The first and third congruences lead to monomials  lower than~$\xi_0$ and~$\xi_2$, respectively.
But the second one leads to $t_2$, which is higher than~$\xi_1$. The notion of scales aims to prevent
the second congruence from the reduction to be carried out.
\end{ex}

\begin{defn} \label{DEF:scale}
Let $p \in R_n \setminus \{0\}$ and
$ \hm(p) = t_1^{e_1}  \cdots t_n^{e_n}.$ The {\em scale} of $p$ with respect to $n$
is defined to be $s$ if $e_1=0$, \ldots,  $e_{s-1} = 0$ and $e_s > 0$.
Let $p \in K_0$. The scale of $p$ with respect to $n$ is defined to be~$n$.
The scale of $p$ with respect to $n$ is denoted by $\scale_n(p)$.
\end{defn}
\begin{ex}
Let $\xi_0=1$, $\xi_1=t_1 t_2$ and $\xi_2=t_3^2$. Regarding $\xi_0, \xi_1$ and $\xi_2$ as elements in $K_0[t_1, t_2, t_3]$, we have that
$\scale_3(\xi_0)=3,$ $\scale_3(\xi_1)=1$ and $\scale_3(\xi_2)=3$; while, regarding them as elements in $K_0[t_1, t_2, t_3, t_4]$,
we have that $\scale_4(\xi_0)=4,$ $\scale_4(\xi_1)=1$ and $\scale_4(\xi_2)=3$.
\end{ex}
Notably, if $p \in K_0$, then the scale of $p$ with respect to $n$ is equal to $n$, which varies as $n$ does. Otherwise, the scale is fixed by $\hm(p)$ no matter
in which ring $p$ lives.

The next lemma extends Lemma~\ref{LM:ibp} and  indicates what kind of integration by parts will be used for reduction.
\begin{lemma} \label{LM:fibp}
Let $\xi$ be a monomial in $t_{1},$ \ldots, $t_n$ and $f \in K_0$. Then the followings hold.
\begin{itemize}
\item[(i)] $f^\prime \xi \equiv 0 \mod K_n^\prime + R_n^{(\xi)}.$
\item[(ii)]
Let $s = \scale_n(\xi)$. Then, for all $c_1, \ldots, c_s \in C$,
\[ (c_1 t_1^\prime + \cdots  + c_s t_s^\prime) \xi \equiv 0 \mod K_n^\prime + R_n^{(\xi)}. \]
\end{itemize}
\end{lemma}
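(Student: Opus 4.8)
The plan is to prove both parts by integration by parts, using flatness to control the derivative of a monomial. Since the tower is flat, $t_k' \in K_0$ for every $k$, so for $\xi = t_1^{e_1}\cdots t_n^{e_n}$ one has
\[ \xi' = \sum_{k=1}^n e_k t_k'\,(\xi/t_k), \]
a $K_0$-linear combination of the monomials $\xi/t_k$, each obtained from $\xi$ by lowering a single exponent and hence strictly below $\xi$ in $\prec$. I would first record the elementary fact that $R_n^{(\xi)}$ is stable under multiplication by $K_0$: scaling a nonzero polynomial by an element of $K_0$ leaves its head monomial unchanged, so every $K_0$-linear combination of monomials $\prec\xi$ again lies in $R_n^{(\xi)}$. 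This is what lets me discard the terms of $\xi'$ freely.

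For part (i), I would simply expand $(f\xi)' = f'\xi + f\xi'$ with $f \in K_0$. Here $f\xi' = \sum_k e_k f t_k'\,(\xi/t_k)$ is a $K_0$-combination of monomials strictly below $\xi$, hence lies in $R_n^{(\xi)}$, while $(f\xi)' \in K_n'$ since $f\xi \in K_n$. Rearranging gives $f'\xi = (f\xi)' - f\xi' \equiv 0 \bmod K_n' + R_n^{(\xi)}$. (This is exactly the multivariate analogue of Lemma~\ref{LM:ibp}.)

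For part (ii), by $C$-linearity it suffices to prove $t_j'\xi \equiv 0 \bmod K_n' + R_n^{(\xi)}$ for each $j$ with $1 \le j \le s$. Writing $\xi = t_s^{e_s}\cdots t_n^{e_n}$ with $e_s \ge 1$, which is precisely the meaning of $s = \scale_n(\xi)$, I would apply integration by parts to the monomial $t_j\xi$:
\[ (t_j\xi)' = t_j'\xi + t_j\xi' = t_j'\xi + \sum_{k\ge s} e_k t_k'\,t_j(\xi/t_k). \]
The decisive step is to track which monomials $t_j(\xi/t_k)$ equal $\xi$ and which are strictly smaller. Because $j \le s \le k$ throughout the sum, the monomial $t_j(\xi/t_k)$ coincides with $\xi$ exactly when $j = k$, which forces $j = k = s$, and is $\prec\xi$ whenever $j < k$, since lowering the exponent of the strictly higher variable $t_k$ outweighs lexicographically the appearance of the lower variable $t_j$. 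Collecting terms yields $(t_j\xi)' = (1 + e_j)t_j'\xi + w$, where $e_j$ is the exponent of $t_j$ in $\xi$ (so $e_j = 0$ for $j<s$ and $e_j = e_s$ for $j=s$) and $w \in R_n^{(\xi)}$. Since $1 + e_j$ is a nonzero integer, dividing by it and using $(t_j\xi)' \in K_n'$ gives $t_j'\xi \equiv 0 \bmod K_n' + R_n^{(\xi)}$.

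The only genuine subtlety, and the step I expect to require the most care, is the resonant case $j = s$: differentiating $t_s\xi$ reproduces $\xi$ itself from the factor $t_s$ already present in $\xi$, so a single integration by parts does not lower the monomial. This is resolved by noting that the total coefficient of $\xi$ is $(1+e_s)t_s'$ with $1+e_s$ a nonzero constant, so the resonant contribution is simply absorbed. It is worth remarking that the hypothesis $j \le s$ is sharp: for $j > s$ and $s \le k < j$ the monomial $t_j(\xi/t_k)$ is strictly \emph{above} $\xi$, so the reduction breaks down, in agreement with the behaviour of $\ell\xi_1$ in Example~\ref{EX:scale1}.
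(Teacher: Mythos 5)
Your proof is correct and takes essentially the same route as the paper's: part (i) is the same one-line integration by parts, and in part (ii) your unified identity $(t_j\xi)^\prime=(1+e_j)t_j^\prime\xi+w$ reproduces exactly the paper's two steps, namely a direct integration by parts for $j<s$ and, for the resonant case $j=s$, the trick of writing $c_st_s^\prime\xi=\frac{c_s}{e_s+1}\bigl(t_s^{e_s+1}\bigr)^\prime\eta$ with $\eta=\xi/t_s^{e_s}$ and dividing by the nonzero integer $e_s+1$. One nitpick: your gloss that $s=\scale_n(\xi)$ ``precisely means'' $e_s\ge 1$ overlooks the convention $\scale_n(\xi)=n$ for $\xi=1$ (the paper treats $\xi=1$ separately), but your identity degenerates there to $t_j^\prime\xi=(t_j)^\prime\in K_n^\prime$, so nothing is lost.
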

\begin{proof}
(i)  It follows from integration by parts and the fact that ~$\xi^\prime$ belongs to~$R_n^{(\xi)}$.

(ii) Set $L_0=0$ and $L_i = \sum_{j=1}^i c_j t_j$  for $i=1$, \ldots, $n$.

If $\xi=1$, then $s=n$ and $L_n^\prime \xi \in K_n^\prime.$
The assertion clearly holds. Assume that $\xi=t_s^{e_s} \cdots t_n^{e_n}$ with $e_s>0$.
Then
$L_s^\prime \xi = L_{s-1}^\prime \xi + c_s t_s^\prime \xi.$
Note that $L_{s-1}^\prime \xi$ belongs to $K_n^\prime + R_n^{(\xi)}$ by a direct use of integration by parts.
Set $ \eta =  \xi/t_s^{e_s}$. Then the term~$c_s t_s^\prime \xi$
is equal to $\frac{c_s}{e_s+1} \left( t_s^{e_s+1} \right)^\prime \eta$.
Integration by parts leads to
\begin{equation} \label{EQ:decrease}
c_s t_s^\prime \xi  \equiv \frac{-c_s}{e_s+1} t_s^{e_s+1} \eta^\prime
 \mod K_n^\prime.
\end{equation}
Then $\eta=1$ if $e_j = 0$ for all $j$ with $j>s$. So $c_s t_s^\prime \xi$ belongs to $K_n^\prime$ by~\eqref{EQ:decrease}.
Otherwise, $e_j>0$ for some $j$ with $s < j \le n$.
Then each monomial in $t_s^{e_s+1} \eta^\prime$ is of total degree
$\sum_{j=s}^n e_j$ and is of degree $e_s+1$ in $t_s$. {  So~$t_s^{e_s+1} \eta^\prime \prec \xi$.
Consequently, $c_s t_s^\prime \xi \in K_n^\prime + R_n^{(\xi)}$ by~\eqref{EQ:decrease}.}
\end{proof}

In the rest of this section, we let $K_0=C(t_0)$ with the usual derivation in $t_0$. By Lemma~\ref{LM:rational},
we may further assume that $t_i^\prime$ is nonzero and $t_0$-simple for all $i$ with $1 \le i \le n$.
\begin{defn} \label{DEF:frigid}
For every $k$ with $1 \le k \le n$,
an element of~$K_0$ is said to be {\em $k$-rigid} if either it is equal to zero or
it is $t_0$-simple and is {  not a $C$-linearly combination of
$t_1^\prime, \ldots, t_k^\prime$.}
\end{defn}

\begin{prop} \label{PROP:fin-field0}
For $p \in R_n$, there exists $q \in R_n$  such that
$$p \equiv q \mod K_n^\prime$$
and that $\hc(q)$ is $s$-rigid, where $s = \scale_n(q)$. Moreover, $q \preceq p$.
\end{prop}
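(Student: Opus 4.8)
The plan is to induct on the head monomial $\hm(p)$ with respect to $\prec$, reducing the head term of $p$ at each step either to something with a strictly smaller head monomial or to a head coefficient that is already rigid. Since $\prec$ is a monomial order on the finitely many variables $t_1, \ldots, t_n$, it is a well-ordering, so this Noetherian induction is legitimate; the base case $p=0$ is settled by taking $q=0$, whose head coefficient $0$ is $s$-rigid for every $s$. For $p \neq 0$ I would write $\xi = \hm(p)$, $\ell = \hc(p) \in K_0$, and $s = \scale_n(\xi)$, and then reduce the single term $\ell\xi$ modulo $K_n^\prime$.

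The first move is the rational reduction in $K_0 = C(t_0)$, splitting $\ell = f^\prime + v$ with $f \in K_0$ and $v$ its $t_0$-simple part. By Lemma~\ref{LM:fibp}(i), the term $f^\prime\xi$ satisfies $f^\prime\xi \equiv 0 \mod K_n^\prime + R_n^{(\xi)}$, so modulo $K_n^\prime$ it contributes only monomials strictly below $\xi$. Thus everything hinges on the behaviour of $v\xi$, and the decisive dichotomy is whether $v$ lies in the $C$-span of $t_1^\prime, \ldots, t_s^\prime$ (these are $t_0$-simple by Lemma~\ref{LM:rational} and $C$-linearly independent by Remark~\ref{RE:indep}).

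In the first case, $v = c_1 t_1^\prime + \cdots + c_s t_s^\prime$ for some $c_i \in C$ (in particular if $v=0$). Here Lemma~\ref{LM:fibp}(ii), whose hypothesis invokes exactly the index $s = \scale_n(\xi)$, yields $v\xi \equiv 0 \mod K_n^\prime + R_n^{(\xi)}$; combined with the bound on $f^\prime\xi$ this gives $\ell\xi \equiv 0 \mod K_n^\prime + R_n^{(\xi)}$. Consequently $p \equiv \tilde p \mod K_n^\prime$ with $\tilde p \prec \xi$, and the induction hypothesis applied to $\tilde p$ produces the desired $q \preceq \tilde p \prec p$. In the second case $v$ is nonzero, $t_0$-simple, and \emph{not} a $C$-linear combination of $t_1^\prime, \ldots, t_s^\prime$, which is precisely the condition that $v$ be $s$-rigid (Definition~\ref{DEF:frigid}). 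Using only Lemma~\ref{LM:fibp}(i) I get $\ell\xi \equiv v\xi \mod K_n^\prime + R_n^{(\xi)}$, hence $p \equiv v\xi + w \mod K_n^\prime$ with $w \prec \xi$. Setting $q = v\xi + w$ gives $\hm(q) = \xi$, so $\scale_n(q) = s$ and $\hc(q) = v$ is $s$-rigid, while $q \preceq p$ because $\hm(q) = \hm(p)$.

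The point requiring care, and the one that dictates the whole design, is the exact matching between scale and rigidity: Lemma~\ref{LM:fibp}(ii) can annihilate a multiple of $\xi$ only through coefficients taken from $t_1^\prime, \ldots, t_s^\prime$ with $s = \scale_n(\xi)$ — trying to use $t_j^\prime$ with $j>s$ would \emph{raise} rather than lower the monomial, as Example~\ref{EX:scale1} illustrates — and $s$-rigidity forbids precisely these cancellable combinations. I expect the main subtlety to be the bookkeeping of congruences modulo $K_n^\prime + R_n^{(\xi)}$ versus modulo $K_n^\prime$, together with the observation that the proposition constrains only the head coefficient of $q$ and leaves its lower terms free; it is exactly this freedom that makes the non-rigid-to-rigid passage immediate in the second case, rather than forcing a further reduction of the tail.
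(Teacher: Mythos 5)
Your proof is correct and follows essentially the same route as the paper's: rational reduction of $\hc(p)$ into $f^\prime + v$, Lemma~\ref{LM:fibp}(i) to discard $f^\prime\xi$, the dichotomy on whether the $t_0$-simple part $v$ is $s$-rigid (equivalently, whether it lies in the $C$-span of $t_1^\prime,\ldots,t_s^\prime$, handled by Lemma~\ref{LM:fibp}(ii) when it does), and Noetherian induction on the head monomial. The only difference is presentational: you frame the whole argument as induction on $\hm(p)$ from the start, while the paper performs one reduction step and then invokes the Noetherian induction at the end.
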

\begin{proof}
Set $q=0$ if $p=0$. Assume $p \neq 0$
and $\xi=\hm(p)$.
By the rational reduction, $\hc(p)=f^\prime + g$ for some $f, g \in K_0$ with $g$ being $t_0$-simple.
Then $p = f^\prime \xi + g \xi \mod R_n^{(\xi)}$. By Lemma~\ref{LM:fibp} (i), $p \equiv g \xi +r \mod K_n^\prime$
for some $r \in  R_n^{(\xi)}$. Set $s = \scale_n(\xi)$.
If $g$ is nonzero and $s$-rigid, then set $q=g \xi+r$.
Otherwise, $p \equiv \tilde r \mod K_n^\prime$ for some $\tilde r \in R_n^{(\xi)}$ by Lemma~\ref{LM:fibp} (ii). The proposition
follows from a direct Noetherian induction on $\hm(\tilde r)$ with respect to $\prec$.
\end{proof}

\begin{ex}\label{EX:flatpoly}
Let $K_0 {=} \bC(x)$, $t_1{=}\log(x)$, $t_2{=}\log(x+1)$.
and
$$p=t_1^2 t_2 + (2/x)t_1t_2 + ((2/(x+1))t_1.$$
Then $\hc(p) =1,$  which is not $1$-rigid.
Since  $t_1^2t_2 = x^\prime t_1 t_2^2$, integration by parts leads to $p = \left(xt_1^2t_2\right)^\prime + q$, where
$q = \left(\frac{2}{x}- 2\right)t_1t_2 - \frac{x}{x+1}t_1^2 + \frac{2}{x+1}t_1.$
We can then reduce $q$ further, because $\hc(q) = (2t_1-2x)^\prime$,  which is not $1$-rigid either.
Repeating this reduction a finite number of times, we see that $p$ is equal to the derivative of
$(x+1)t_1^2t_2-2xt_1t_2-xt_1^2+(2x+2)t_2+4xt_1-6x.$
\end{ex}
\subsection{Reduction}
A flat analogue of straight polynomials is given below.
\begin{defn} \label{DEF:fpoly}
A polynomial in $C[t_0]$ is said to be $t_0$-flat if it is zero. For $n \ge 1$,
$p \in K_{n-1}[t_n]$ is called a {\em $t_n$-flat polynomial}
if there exist $p_i \in K_i[t_{i+1}, \ldots, t_n]$ for all $i$ with $0 \le i \le n-1$
such that {  $p = \sum_{i=0}^{n-1} p_i$,
$\hc(p_i)$ is $t_i$-simple for all $i \ge 1$, and $\hc(p_0)$ is $s$-rigid,} where $s = \scale_n(p_0)$.
The sequence $\{p_i\}_{i=0, 1, \ldots, n-1}$ is called  a  {\em sequence associated to $p$.}
\end{defn}

\begin{ex} \label{EX:flat1}
Let $n=3$ and $t_0=x$, $t_1=\log(x)$, $t_2=\log(x+1)$ and $t_3=\log(x+2)$. Consider $p \in K_2[t_3]$
\[ p = \underbrace{\frac{1}{t_2} t_3^2}_{p_2} + \underbrace{\frac{1}{t_1} t_2 t_3}_{p_1} + \underbrace{\frac{1}{x+k} t_3^3 + x t_2 t_3}_{p_0}, \]
where $k \in \bZ$. Obviously, $\hc(p_2)$ is $t_2$-simple and $\hc(p_1)$ is $t_1$-simple.
Moreover, $\scale_3(p_0)=3$ and $\hc(p_0)$ is $3$-rigid
if $k \notin \{0,1,2\}$. So $p$ is $t_3$-flat if  $k \notin \{0,1,2\}$.
\end{ex}
We are going to extend the results in Section~\ref{SECT:str} to the flat case, based
on the following technical lemma.
\begin{lemma} \label{LM:rec}
Let $n \ge 1$ and $p \in K_{n-1}[t_n]$ be $t_n$-flat. Set $\ell$ to be $\lc_{t_n}(p)$.
Then $\fp_{t_{n-1}}(\ell)$ is $t_{n-1}$-simple and $\pp_{t_{n-1}}(\ell)$ is $t_{n-1}$-flat.
Moreover, $\pp_{t_{n-1}}(\ell) - ct_n^\prime$ is $t_{n-1}$-flat for all~$c \in C$ if $n >1$.
\end{lemma}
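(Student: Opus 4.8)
The plan is to argue by descending on the $t_n$-degree of a fixed associated sequence. For $n=1$ the claim is immediate: $\ell=\lc_{t_1}(p)=\hc(p_0)$ is $\scale_1(p_0)$-rigid, hence $1$-rigid, so it is $0$ or $t_0$-simple, giving $\fp_{t_0}(\ell)=\ell$ and $\pp_{t_0}(\ell)=0$; the ``moreover'' is then vacuous. So assume $n\ge 2$ and fix an associated sequence $\{p_i\}_{i=0}^{n-1}$ with $p=\sum_{i=0}^{n-1}p_i$, $p_i\in K_i[t_{i+1},\ldots,t_n]$, $\hc(p_i)$ being $t_i$-simple for $i\ge 1$ and $\hc(p_0)$ being $\scale_n(p_0)$-rigid. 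Set $d=\deg_{t_n}(p)$ and let $q_i$ be the coefficient of $t_n^d$ in $p_i$, so that $\ell=\sum_{i=0}^{n-1}q_i$ with $q_i\in K_i[t_{i+1},\ldots,t_{n-1}]$. Everything hinges on first proving that there is \emph{no cancellation among the top $t_n$-coefficients}, i.e. $d=\max_i\deg_{t_n}(p_i)$; granting this, each $q_i$ is either $0$ or the genuine leading coefficient $\lc_{t_n}(p_i)$, whence $\hc(q_i)=\hc(p_i)$ and $\hm(q_i)=\hm(p_i)/t_n^{d}$.

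This non-cancellation step is the crux and the main obstacle. Let $d'=\max_i\deg_{t_n}(p_i)$, put $I=\{i:\deg_{t_n}(p_i)=d'\}$ and $i^\ast=\max I$; I must show the coefficient of $t_n^{d'}$ in $p$, namely $\Lambda=\sum_{i\in I}\lc_{t_n}(p_i)$, is nonzero. If $i^\ast=0$ this is clear. Otherwise I will extract the coefficient of the single monomial $\mu=\hm(\lc_{t_n}(p_{i^\ast}))$, a monomial in $t_{i^\ast+1},\ldots,t_{n-1}$, from $\Lambda$. In $\lc_{t_n}(p_{i^\ast})$ this coefficient is exactly $\hc(p_{i^\ast})$, which is $t_{i^\ast}$-simple, hence $t_{i^\ast}$-proper and nonzero; for every $i\in I$ with $i<i^\ast$ the corresponding coefficient lies in $K_i[t_{i+1},\ldots,t_{i^\ast}]\subseteq K_{i^\ast-1}[t_{i^\ast}]$ and is therefore a polynomial in $t_{i^\ast}$. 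Consequently the $t_{i^\ast}$-proper part of the $\mu$-coefficient of $\Lambda$ equals $\hc(p_{i^\ast})\neq 0$, so $\Lambda\neq 0$ and $d=d'$.

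With this established the first two assertions follow by splitting $\ell=\sum q_i$ according to $t_{n-1}$. Since $p_{n-1}\in K_{n-1}[t_n]$, the term $q_{n-1}$ is either $0$ or $\lc_{t_n}(p_{n-1})=\hc(p_{n-1})$, hence $t_{n-1}$-simple; and every $q_i$ with $i\le n-2$ lies in $K_i[t_{i+1},\ldots,t_{n-1}]\subseteq K_{n-2}[t_{n-1}]$, i.e. is a polynomial in $t_{n-1}$. Therefore $\fp_{t_{n-1}}(\ell)=q_{n-1}$ is $t_{n-1}$-simple and $\pp_{t_{n-1}}(\ell)=\sum_{i=0}^{n-2}q_i$. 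For the latter I will check that $\{q_i\}_{i=0}^{n-2}$ is an associated sequence witnessing $t_{n-1}$-flatness: for $1\le i\le n-2$ one has $\hc(q_i)=\hc(p_i)$ ($t_i$-simple) or $q_i=0$, while for $i=0$ a direct comparison of head monomials shows $\scale_{n-1}(q_0)\le\scale_n(p_0)$, so that $\hc(q_0)=\hc(p_0)$ remains $\scale_{n-1}(q_0)$-rigid, because $s$-rigidity implies $s'$-rigidity whenever $s'\le s$.

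Finally, for the ``moreover'' I modify the $i=0$ slot to $q_0-ct_n'$, which is legitimate since $t_n'\in K_0$ by flatness and $t_n'$ is $t_0$-simple. If $\hm(q_0)\neq 1$ the subtraction changes neither $\hm(q_0)$ nor $\hc(q_0)$, so the sequence from the previous paragraph still works. If instead $q_0\in K_0$, then $\scale_n(p_0)=n$ forces $\hc(p_0)=q_0$ to be $n$-rigid, and I conclude that $q_0-ct_n'\in K_0$ is $(n-1)$-rigid: it is $t_0$-simple, and were it a $C$-linear combination of $t_1',\ldots,t_{n-1}'$, then $q_0$ would be a $C$-linear combination of $t_1',\ldots,t_n'$, contradicting either the $n$-rigidity of $q_0$ (when $q_0\neq 0$) or the linear independence of $t_1',\ldots,t_n'$ from Remark~\ref{RE:indep} (when $q_0=0$). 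Once the top $t_n$-coefficients are known not to collapse, the remainder is routine bookkeeping with head monomials, scales, and the monotonicity of rigidity.
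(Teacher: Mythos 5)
Your proof is correct, and at its core it is the same proof as the paper's: fix an associated sequence $\{p_i\}$, let $q_i$ be the coefficient of $t_n^{d}$ in $p_i$, identify $\fp_{t_{n-1}}(\ell)=q_{n-1}$ and $\pp_{t_{n-1}}(\ell)=\sum_{i=0}^{n-2}q_i$, and handle $\pp_{t_{n-1}}(\ell)-ct_n^\prime$ by the same case split (head monomial of $q_0$ different from $1$, versus $q_0\in K_0$ nonzero or zero --- the paper spells these out as three cases, you merge the last two). The genuine added value is your non-cancellation step. The paper asserts that $q_i=0$ when $\deg_{t_n}(p_i)<d$ and that $\hc(q_i)=\hc(p_i)$ ``otherwise'', but the latter is only true when $\deg_{t_n}(p_i)=d$ exactly; since Definition~\ref{DEF:fpoly} places no bound $\deg_{t_n}(p_i)\le\deg_{t_n}(p)$ on an associated sequence, the possibility that the top $t_n$-coefficients of the $p_i$ cancel (so that $\deg_{t_n}(p)<\max_i\deg_{t_n}(p_i)$) must be excluded, and the paper does not address it. Your argument --- take $i^\ast$ maximal among the indices attaining the top degree, extract the coefficient of $\mu=\hm(\lc_{t_n}(p_{i^\ast}))$, and note that its $t_{i^\ast}$-fractional part is the nonzero $t_{i^\ast}$-simple element $\hc(p_{i^\ast})$ while all lower-indexed contributions are polynomials in $t_{i^\ast}$ --- closes this gap and simultaneously justifies the identities $\hc(q_i)=\hc(p_i)$ and $\hm(q_i)=\hm(p_i)/t_n^{d}$ that both write-ups rely on. The remaining ingredients (monotonicity of rigidity in the scale, $t_0$-simplicity of $q_0-ct_n^\prime$ as a difference of $t_0$-simple elements, Remark~\ref{RE:indep} for the case $q_0=0$) match the paper's reasoning, so your version is, if anything, the more complete of the two.
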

\begin{proof}
The lemma is trivial if $p =0$.

Assume that $p$ is nonzero and $d = \deg_{t_n}(p)$.
Let $\{p_i\}_{i=0, 1, \ldots, n-1}$ be a sequence associated to $p$,
and let $\ell_i$ be the coefficient of $t_n^d$ in $p_i$.
Evidently, $\ell = \sum_{i=0}^{n-1} \ell_i$, $\fp_{t_{n-1}}(\ell) = \ell_{n-1}$ and $\pp_{t_{n-1}}(\ell) = \sum_{i=0}^{n-2} \ell_i$.
Moreover,  $\ell_i = 0$ if $\deg_{t_n}(p_i) < d$, and $\hc(\ell_i) = \hc(p_i)$ otherwise.
This is  because $\prec$ is purely lexicographic with $t_{i+1} \prec \cdots \prec t_n$ for all $i$ with $0 \le i \le n-1$.
Thus, $\fp_{t_{n-1}}(\ell)$ is $t_{n-1}$-simple and $\pp_{t_{n-1}}(\ell)$ is $t_{n-1}$-flat by Definition~\ref{DEF:fpoly}.

It remains to show the second assertion. Assume  $n > 1$. Then
\begin{equation} \label{EQ:last}
\pp_{t_{n-1}}(\ell) - ct_n^\prime = \ell_{n-2}+\cdots+\ell_1+ \tilde{\ell}_0 \quad \text{with $\tilde{\ell}_0 =\ell_0-ct_n^\prime$},
\end{equation}
and  $\hc(\ell_j)$ is $t_j$-simple for all $j$ with $1 \le j \le n-2$.

Set $s = \scale_n(p_0)$ and $\tilde s = \scale_{n-1}(\tilde{\ell}_0)$.
It suffices to prove that $\hc(\tilde{\ell}_0)$ is $\tilde s$-rigid by~\eqref{EQ:last} and  Definition~\ref{DEF:fpoly}.


\smallskip  \noindent
{\em Case 1.} $\ell_0 \notin K_0$.   Then $s < n$,
$$\hm(p_0) = t_s^{e_s} \cdots t_{n-1}^{e_{n-1}} t_{n}^{d} \quad \text{and} \quad \hm(\ell_0) = t_s^{e_s}  \cdots t_{n-1}^{e_{n-1}},$$
where $e_s >0$.  Moreover,
$$s = \scale_{n-1}(\ell_0), \,\, \hm(\ell_0) = \hm(\tilde{\ell}_0), \,\, \text{and} \,\,
\hc(p_0)=\hc(\ell_0)=\hc(\tilde{\ell}_0).$$
In particular, $\tilde{s} = s$. Hence, $\hc(\tilde{\ell}_0)$ is  $\tilde s$-rigid, because $\hc(p_0)$ is $s$-rigid.

\smallskip \noindent
{\em Case 2.}  $\ell_0 \in K_0$ with $\ell_0 \neq 0$.
Then $\hm(p_0) = t_n^d$ and $s=n$.
Moreover, $\tilde{s}=n-1$, since $\tilde{\ell}_0 \in K_0$.
Note that $p$ is $t_n$-flat. So  $\hc(p_0)$ is not a $C$-linear combination of $\{t_1^\prime, \ldots, t_{n-1}^\prime, t_n^\prime\}$,
and neither is $\ell_0$ because $\ell_0 = \hc(p_0)$.
Consequently,  $\tilde{\ell}_0$ is not a $C$-linear combination of $\{t_1^\prime, \ldots, t_{n-1}^\prime\}$, and neither
is $\hc(\tilde{\ell}_0)$, because $\hc(\tilde{\ell}_0) = \tilde{\ell}_0$. Thus, $\hc(\tilde{\ell}_0)$ is  $(n-1)$-rigid.

\smallskip \noindent
{\em Case 3.} $\ell_0 = 0$. Then $\tilde s = n-1$ and $\hc(\tilde{\ell}_0) = \tilde{\ell}_0 = -c t_n^\prime$, which is $(n-1)$-rigid by Remark~\ref{RE:indep}.

The second assertion is proved.
\end{proof}
The next lemma is a flat-analogue of Lemma~\ref{LM:rigid}
\begin{lemma} \label{LM:flatlc}
Let $n \ge 1$ and $p \in K_{n-1}[t_n]$ be $t_n$-flat. If
\begin{equation} \label{EQ:flatlc}
\lc_{t_n}(p) \equiv c t_n^\prime \mod K_{n-1}^\prime
\end{equation}
 for some $c \in C$,
then both $p$ and $c$ are zero.
\end{lemma}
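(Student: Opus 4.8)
The plan is to proceed by induction on $n$, mirroring the structure of the proof of Lemma~\ref{LM:rigid} in the straight case but using the tools developed for flat towers. Let $\ell = \lc_{t_n}(p)$. First I would handle the base case $n=1$, where $p \in K_0[t_1]$ is $t_1$-flat, so $\ell = \lc_{t_1}(p) \in K_0$ is $1$-rigid by Lemma~\ref{LM:rec} (i.e.\ $\fp_{t_0}(\ell)$ is $t_0$-simple and $\pp_{t_0}(\ell)$ is $t_0$-flat, hence zero). The hypothesis \eqref{EQ:flatlc} reads $\ell \equiv c t_1^\prime \mod K_0^\prime$, i.e.\ $\ell - c t_1^\prime \in K_0^\prime$. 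Since $\ell$ is $1$-rigid it is $t_0$-simple and not a $C$-multiple of $t_1^\prime$; I would apply $\hp_{t_0}$ to deduce $\ell = c\,\hp_{t_0}(t_1^\prime)$ via Lemma~\ref{LM:ld} and Lemma~\ref{LM:hp}, and the definition of $1$-rigid (Definition~\ref{DEF:frigid}) then forces $c=0$ and $\ell=0$.

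For the inductive step, assume $n>1$ and the lemma holds for $n-1$. The key is to feed \eqref{EQ:flatlc} through the two assertions of Lemma~\ref{LM:rec}: $\fp_{t_{n-1}}(\ell)$ is $t_{n-1}$-simple, and both $\pp_{t_{n-1}}(\ell)$ and $\pp_{t_{n-1}}(\ell) - c t_n^\prime$ are $t_{n-1}$-flat. The strategy is to separate \eqref{EQ:flatlc} into its fractional and polynomial parts with respect to $t_{n-1}$. Writing $f = \fp_{t_{n-1}}(\ell)$ and $P = \pp_{t_{n-1}}(\ell)$, I would apply the Hermitian-part map $\hp_{t_{n-1}}$ to the congruence $\ell \equiv c t_n^\prime \mod K_{n-1}^\prime$. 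Since $f$ is $t_{n-1}$-simple it equals its own Hermitian part, so $f = c\,\hp_{t_{n-1}}(t_n^\prime)$; but in a flat tower $t_n^\prime \in K_0 \subseteq K_{n-2}$, so $\hp_{t_{n-1}}(t_n^\prime) = 0$, forcing $f = 0$. This kills the fractional part and reduces \eqref{EQ:flatlc} to a congruence among polynomials, $P \equiv c t_n^\prime \mod K_{n-1}^\prime$.

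Next I would analyze the polynomial congruence. Applying Lemma~\ref{LM:lc} to $P - c t_n^\prime$ (which lies in $K_{n-1}^\prime$ after moving $c t_n^\prime$, noting $c t_n^\prime \in K_{n-2}$ has $t_{n-1}$-degree zero) gives $\lc_{t_{n-1}}(P - c t_n^\prime) \equiv \tilde c\, t_{n-1}^\prime \mod K_{n-2}^\prime$ for some $\tilde c \in C$. Here the second assertion of Lemma~\ref{LM:rec} is essential: it guarantees that $P - c t_n^\prime$ is itself $t_{n-1}$-flat, so its leading coefficient with respect to $t_{n-1}$ is the leading coefficient of a $t_{n-1}$-flat polynomial and the induction hypothesis applies. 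By induction, $\lc_{t_{n-1}}(P - c t_n^\prime)$ and $\tilde c$ are both zero, whence $P - c t_n^\prime = 0$; thus $\ell = c t_n^\prime$. Finally I would use that $\ell$ arises from a $t_n$-flat polynomial: tracing through Definition~\ref{DEF:fpoly} and Lemma~\ref{LM:rec}, the constraint that $\hc$ of the scale-$n$ part is rigid (not a $C$-combination of $t_1^\prime,\ldots,t_n^\prime$) contradicts $\ell = c t_n^\prime$ unless $c=0$, giving $\ell = 0$ and hence $c=0$.

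The main obstacle I anticipate is the careful bookkeeping in the polynomial-part step: one must verify that subtracting $c t_n^\prime$ preserves $t_{n-1}$-flatness so that the induction hypothesis is genuinely applicable, and this is exactly what the somewhat delicate three-case analysis in Lemma~\ref{LM:rec} (distinguishing $\ell_0 \notin K_0$, $\ell_0 \in K_0 \setminus \{0\}$, and $\ell_0 = 0$) was set up to provide. Aligning the scale and rigidity conditions across the subtraction — ensuring $\tilde s = \scale_{n-1}(\tilde\ell_0)$ matches the rigidity degree correctly in each case — is the crux, and the final contradiction extracting $c = 0$ must correctly invoke the linear independence of $t_1^\prime,\ldots,t_n^\prime$ over $C$ from Remark~\ref{RE:indep}.
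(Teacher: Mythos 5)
Your proof is correct and takes essentially the same route as the paper's: induct on $n$, apply $\hp_{t_{n-1}}$ to kill $\fp_{t_{n-1}}(\ell)$, use Lemma~\ref{LM:lc} together with the second assertion of Lemma~\ref{LM:rec} and the induction hypothesis to force $\ell = c t_n^\prime$, and then invoke the rigidity built into Definition~\ref{DEF:fpoly} (with Remark~\ref{RE:indep}) to conclude $c=0$, hence $\ell = 0$ and $p=0$. The only cosmetic differences are in the base case, where the paper observes that $K_0 \subset K_1$ is also straight and cites Lemma~\ref{LM:rigid} while you argue directly from $1$-rigidity, and two slightly off citations on your side (Lemma~\ref{LM:ld} is not needed, only Lemma~\ref{LM:hp}; and the $1$-rigidity of $\lc_{t_1}(p)$ comes straight from Definition~\ref{DEF:fpoly}, not from Lemma~\ref{LM:rec}), neither of which affects the argument.
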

\begin{proof}
If $n=1$, then the tower $K_0 \subset K_1$ is also straight, and
$p$ is $t_1$-straight by Definition~\ref{DEF:spoly} and Lemma~\ref{LM:rational}.
Both~$p$ and $c$ are zero by Lemma~\ref{LM:rigid}.

Assume that $n>1$ and the lemma holds for $n-1$.
Set $\ell = \lc_{t_n}(p)$.
Applying the map $\hp_{t_{n-1}}$ to~\eqref{EQ:flatlc}, we have $\hp_{t_{n-1}}(\ell)=0$.
Then  $\fp_{t_{n-1}}(\ell)= 0$ by the first assertion of Lemma~\ref{LM:rec}. Consequently,  we have $\ell = \pp_{t_{n-1}}(\ell)$.
Let $q = \ell - ct_n^\prime$. Then $q$ is $t_{n-1}$-flat by the second assertion of Lemma~\ref{LM:rec}.
On the other hand, $q \in K_{n-1}^\prime$ by~\eqref{EQ:flatlc}. Then $\lc_{t_{n-1}}(q) \equiv \tilde c t_{n-1}^\prime \mod K_{n-2}^\prime$ for some $\tilde c \in C$ by Lemma~\ref{LM:lc}.
So $q=0$ by the induction hypothesis.
Accordingly,
\begin{equation} \label{EQ:lcf}
\ell = ct_n^\prime \in K_0.
\end{equation}
Let $\{p_i\}_{i=0,1,\ldots,n-1}$ be a sequence associated to $p$.
By~\eqref{EQ:lcf},  we have $\hc(p_0)=c t_n^\prime$, because $c t_n^\prime$ is not $t_i$-simple for all $i$ with $1 \le i \le n$.
Hence, $\hm(p_0)$ is a nonnegative power of $t_n$ and $\scale_n(p_0)=n$.
Then $c t_n^\prime$ is $n$-rigid by Definition~\ref{DEF:fpoly}. We have $c=0$.  By~\eqref{EQ:lcf},  we conclude  that $\ell$ is zero, and so is $p$.
\end{proof}

The following proposition corresponds to Proposition~\ref{PROP:spoly}.
\begin{prop} \label{PROP:fsin-field}
{Let $n \ge 1$} and $p$  be a $t_n$-flat polynomial in~$K_{n-1}[t_n]$. If $p \in K_n^\prime$, then $p=0$.
\end{prop}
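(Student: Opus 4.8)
The plan is to follow the template of Proposition~\ref{PROP:spoly}, substituting the flat machinery for the straight one: Lemma~\ref{LM:flatlc} plays exactly the role that Lemma~\ref{LM:rigid} played there. Since the hypothesis already stipulates $n \ge 1$, no separate base case in $n$ is needed, and the argument reduces to a short two-step deduction because the recursive work has been front-loaded into the earlier lemmas.

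First I would apply Lemma~\ref{LM:lc} with the differential field $F = K_{n-1}$ and the primitive monomial $z = t_n$, so that $F(z) = K_n$ and $F[z] = K_{n-1}[t_n]$. The hypothesis $p \in K_n^\prime$ is precisely $p \in F(z)^\prime$, and $p$ lies in $F[z]$; the lemma therefore produces a constant $c \in C_{K_{n-1}}$ with
\[ \lc_{t_n}(p) \equiv c\, t_n^\prime \mod K_{n-1}^\prime. \]
Because the tower~\eqref{EQ:ext} is primitive and hence introduces no new constants, we have $C_{K_{n-1}} = C$, so $c \in C$.

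Next I would invoke Lemma~\ref{LM:flatlc} directly. The polynomial $p$ is $t_n$-flat by hypothesis, and the displayed congruence is exactly the premise of that lemma; it then forces both $p$ and $c$ to vanish. In particular $p = 0$, which is the desired conclusion.

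I do not expect a genuine obstacle at this level, since the recursive difficulty has already been discharged upstream: propagating $t_n$-flatness down to the leading coefficient, and the induction on $n$, were both absorbed into Lemma~\ref{LM:rec} and Lemma~\ref{LM:flatlc}. The one point worth flagging is that, unlike the straight case---where a nonzero polynomial trivially has a nonzero leading coefficient and the conclusion is immediate from $\lc_{t_n}(p)=0$---here the implication ``$\lc_{t_n}(p) \equiv c\, t_n^\prime$ forces $p = 0$'' is itself nontrivial, being precisely the content that Lemma~\ref{LM:flatlc} now allows us to cite rather than re-prove.
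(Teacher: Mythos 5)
Your proof is correct and is essentially identical to the paper's: both apply Lemma~\ref{LM:lc} to obtain $\lc_{t_n}(p)\equiv c\,t_n^\prime \mod K_{n-1}^\prime$ and then conclude $p=0$ by Lemma~\ref{LM:flatlc}. Your explicit remark that $C_{K_{n-1}}=C$ (so the constant from Lemma~\ref{LM:lc} lies in $C$) is a detail the paper leaves implicit, but it matches the paper's reasoning exactly.
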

\begin{proof}
Since $p \in K_n^\prime$,
we have $\lc_{t_n}(p) \equiv c t_n^\prime \mod K_{n-1}^\prime $
for some $c \in C$
by Lemma~\ref{LM:lc}. Then $p=0$ by Lemma~\ref{LM:flatlc}.
\end{proof}
The next lemma corresponds to Lemma~\ref{LM:spoly}.
\begin{lemma} \label{LM:fspoly}
For $p \in K_{n-1}[t_n]$, there exists a $t_n$-flat polynomial $q \in K_{n-1}[t_n]$ such that
$p \equiv q \mod K_n^\prime.$
Moreover, $\deg_{t_n}(q)$ is no more than $\deg_{t_n}(p)$.
\end{lemma}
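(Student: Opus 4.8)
The plan is to build the $t_n$-flat polynomial $q$ directly out of the two preceding results, rather than redo the double induction used for the straight case in Lemma~\ref{LM:spoly}. The guiding observation is that the defining conditions in Definition~\ref{DEF:fpoly} split cleanly by scale: the components $p_i$ with $i \ge 1$ only require $\hc(p_i)$ to be $t_i$-simple, which is exactly the output of Lemma~\ref{LM:seq}, whereas the bottom component $p_0 \in R_n$ requires $\hc(p_0)$ to be $s$-rigid with $s = \scale_n(p_0)$, which is exactly the output of Proposition~\ref{PROP:fin-field0}. So the work is already packaged in those two lemmas, and the argument is uniform in $n \ge 1$.

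First I would apply Lemma~\ref{LM:seq} to $p$, obtaining polynomials $p_i \in K_i[t_{i+1}, \ldots, t_n]$ with $p \equiv \sum_{i=0}^{n-1} p_i \bmod K_n^\prime$, where $\hc(p_i)$ is $t_i$-simple for every $i$ with $1 \le i \le n-1$ and $\deg_{t_n}(p_i) \le \deg_{t_n}(p)$ for every $i$. At this stage the only missing requirement concerns $\hc(p_0)$, on which Lemma~\ref{LM:seq} imposes nothing. Since $p_0$ lies in $R_n = K_0[t_1, \ldots, t_n]$, I would next apply Proposition~\ref{PROP:fin-field0} to $p_0$, producing $q_0 \in R_n$ with $p_0 \equiv q_0 \bmod K_n^\prime$, with $\hc(q_0)$ being $s$-rigid for $s = \scale_n(q_0)$, and with $q_0 \preceq p_0$. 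Because $\prec$ is the purely lexicographic order with $t_n$ the largest variable, the head monomial carries the highest $t_n$-power of a polynomial, so $\hm(q_0) \preceq \hm(p_0)$ forces $\deg_{t_n}(q_0) \le \deg_{t_n}(p_0) \le \deg_{t_n}(p)$.

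Finally I would set $q = q_0 + \sum_{i=1}^{n-1} p_i$, so that $p \equiv q \bmod K_n^\prime$. The sequence $\{q_0, p_1, \ldots, p_{n-1}\}$ then serves as an associated sequence witnessing that $q$ is $t_n$-flat: replacing $p_0$ by $q_0$ leaves the $t_i$-simplicity of $\hc(p_i)$ for $i \ge 1$ untouched, while $\hc(q_0)$ now carries the required rigidity by its scale. The degree bound $\deg_{t_n}(q) \le \deg_{t_n}(p)$ follows by taking the maximum of the componentwise bounds established above. The case $n = 1$ needs no separate treatment, since there $K_0 = C(t_0)$ and Lemma~\ref{LM:rational} guarantees that Proposition~\ref{PROP:fin-field0} applies to $p_0 = p$.

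The step I expect to demand the most care is the degree bookkeeping in the \emph{moreover} clause: one must verify that neither Lemma~\ref{LM:seq} nor the $\preceq$-reduction of Proposition~\ref{PROP:fin-field0} can raise the $t_n$-degree, and in particular justify reading $\deg_{t_n}$ off the head monomial under the lexicographic order with $t_n$ maximal. Beyond that, the main conceptual point is merely to recognize that the two earlier results partition precisely the conditions of Definition~\ref{DEF:fpoly} by scale, so that combining them leaves nothing further to prove.
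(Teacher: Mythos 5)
Your proof is correct and is essentially the paper's own argument: apply Lemma~\ref{LM:seq} to split $p$, then Proposition~\ref{PROP:fin-field0} to the bottom component $p_0$, and assemble the resulting sequence as a witness of $t_n$-flatness. In fact you are slightly more thorough than the paper, which asserts the conclusion without spelling out the degree bound that you justify via $q_0 \preceq p_0$ and the lexicographic order having $t_n$ most significant.
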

\begin{proof}
By Lemma~\ref{LM:seq}, there exist $p_i \in K_i[t_{i+1}, \ldots, t_n]$ for all $i$ with $1 \le i \le n-1$ and $p_0 \in R_{n}$ such that
$$p \equiv \sum_{i=1}^{n-1} p_i + p_0\mod K_n^\prime.$$
Moreover, $\hc(p_i) \in K_i$ is $t_i$-simple for all $i \ge 1$.
By Proposition~\ref{PROP:fin-field0}, there exists $r \in R_n$ with $s = \scale_n(r)$
such that $p_0 \equiv r \mod K_n^\prime$ and that $\hc(r)$ is $s$-rigid.
Set $q$ to be $\sum_{i=1}^{n-1} p_i + r $. Then $q$ is $t_n$-flat and $p \equiv q \mod K_n^\prime.$
\end{proof}
\begin{ex} \label{EX:flat2}
Let $p$ be given as in Example~\ref{EX:flat1}, in which $k=2$.
By integration by parts, we have
\[ p \equiv  p_2 + p_1 + \underbrace{ - 3 t_3^\prime t_2 t_3^2 + x t_2 t_3}_{q_0} \mod K_3^\prime. \]
Then $\scale_3(q_0)=2$ and $\hc(q_0) = - 3 t_3^\prime = -3/(x+2)$, which is $2$-rigid.
Hence, $p_2 + p_1 + q_0$ is $t_3$-flat.
\end{ex}

We are ready to present the main result of this section.
\begin{thm} \label{TH:fadd}
For $f \in K_n$, the following assertions hold.
\begin{itemize}
\item[(i)]
There exist a $t_n$-simple element $g \in K_n$ and a $t_n$-flat polynomial $p \in K_{n-1}[t_n]$ such that
\begin{equation} \label{EQ:fadd}
f \equiv g + p \mod K_n^\prime.
\end{equation}
\item[(ii)] $f \equiv 0 \mod K_n^\prime$ if and only if both $g$ and $p$ are zero.
\item[(iii)] If  $f \equiv \tilde g + \tilde p \mod K_n^\prime$, where $\tilde g \in K_n$ is $t_n$-simple
 and $\tilde p \in K_{n-1}[t_n]$, then $g = \tilde g$ and $\deg_{t_n}(p) \le \deg_{t_n}(\tilde p).$
\end{itemize}
\end{thm}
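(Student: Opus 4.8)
The plan is to mirror exactly the proof of Theorem~\ref{TH:sadd}, now drawing on the flat-case machinery developed in this section in place of its straight-case counterparts. The structural parallel is complete: \texttt{HermiteReduce} together with Lemma~\ref{LM:fspoly} will supply part~(i), Proposition~\ref{PROP:fsin-field} replaces Proposition~\ref{PROP:spoly} in the integrability criterion of part~(ii), and Lemma~\ref{LM:hp} handles the uniqueness of the Hermitian part in part~(iii). The only genuinely new ingredient one must watch for is that $t_n$-flatness, unlike $t_n$-straightness, is a condition on a whole associated sequence $\{p_i\}$ rather than on the single leading coefficient, so the degree-drop argument in part~(iii) requires the additional stability property recorded below.

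For part~(i), I would first apply Algorithm \texttt{HermiteReduce} to $f \in K_n = K_{n-1}(t_n)$ to obtain a $t_n$-simple element $g \in K_n$ and a polynomial $h \in K_{n-1}[t_n]$ with $f \equiv g + h \bmod K_n^\prime$. Then Lemma~\ref{LM:fspoly} replaces $h$ by a $t_n$-flat polynomial $p$ with $\deg_{t_n}(p) \le \deg_{t_n}(h)$ and $h \equiv p \bmod K_n^\prime$, giving~\eqref{EQ:fadd}. For part~(ii), the direction $g = p = 0 \Rightarrow f \in K_n^\prime$ is immediate from~\eqref{EQ:fadd}. Conversely, if $f \in K_n^\prime$, then $g + p \equiv 0 \bmod K_n^\prime$; applying the $C_{K_{n-1}}$-linear map $\hp_{t_n}$ and using $\hp_{t_n}(g) = g$ (as $g$ is $t_n$-simple) together with $\hp_{t_n}(p) = 0$ (as $p \in K_{n-1}[t_n]$ lies in the kernel $K_n^\prime + K_{n-1}[t_n]$) yields $g = 0$, whence $p \in K_n^\prime$ and so $p = 0$ by Proposition~\ref{PROP:fsin-field}.

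For part~(iii), from $f \equiv g + p \equiv \tilde g + \tilde p \bmod K_n^\prime$ I get $g - \tilde g \equiv \tilde p - p \bmod K_n^\prime$; since $g - \tilde g$ is $t_n$-simple and $\tilde p - p \in K_{n-1}[t_n]$, Lemma~\ref{LM:hp} forces $g = \tilde g$. The remaining degree inequality is where the main obstacle lies, and it is exactly here that the flat case differs from the straight case. Suppose for contradiction that $\deg_{t_n}(\tilde p) < \deg_{t_n}(p)$. Then $\lc_{t_n}(p - \tilde p) = \lc_{t_n}(p)$, and the top-degree part of $p - \tilde p$ coincides with that of $p$; I would argue that $p - \tilde p$ is again $t_n$-flat by inspecting a sequence associated to $p$ and discarding the lower-degree contributions of $\tilde p$, which affect none of the rigidity or simplicity requirements imposed on the associated sequence by Definition~\ref{DEF:fpoly} at the top degree. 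Since $p - \tilde p \equiv 0 \bmod K_n^\prime$, Proposition~\ref{PROP:fsin-field} gives $p - \tilde p = 0$, contradicting $\deg_{t_n}(\tilde p) < \deg_{t_n}(p)$.

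The hard part will be verifying that truncating or perturbing the lower-degree terms preserves $t_n$-flatness, because flatness is not determined by the leading coefficient alone; one must confirm that an associated sequence for $p$, restricted to the contributions at $\deg_{t_n}(p)$, still satisfies all the clauses of Definition~\ref{DEF:fpoly}. The key observation that makes this routine is the one already isolated in the proof of Lemma~\ref{LM:rec}: because $\prec$ is purely lexicographic with $t_{i+1} \prec \cdots \prec t_n$, the coefficient of $t_n^{d}$ (where $d = \deg_{t_n}(p)$) in each $p_i$ inherits either the head coefficient of $p_i$ or vanishes, so the simplicity and rigidity conditions pass unchanged to the top-degree slice. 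Thus the degree comparison reduces to the same leading-coefficient analysis that underlies Proposition~\ref{PROP:fsin-field}, and no further Risch-equation solving is needed.
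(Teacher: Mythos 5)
Your parts (i) and (ii) are exactly the paper's argument, and the first half of (iii) (deducing $g=\tilde g$ from Lemma~\ref{LM:hp}) is also fine. The gap is in your degree-drop argument: the claim that $p-\tilde p$ is again $t_n$-flat is false in general, and it is not merely a matter of checking the ``top-degree slice.'' Flatness is an existence statement about an associated sequence $\{q_i\}$ in which \emph{every} piece $q_i\in K_i[t_{i+1},\ldots,t_n]$ must have an admissible head coefficient, and those head monomials can sit at $t_n$-degrees strictly below $d=\deg_{t_n}(p)$; subtracting an arbitrary $\tilde p\in K_{n-1}[t_n]$ of lower $t_n$-degree disturbs precisely those pieces. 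Concretely, take $n=2$, $K_0=C(x)$, $t_1=\log(x)$, $t_2=\log(x+1)$, and let $p=\frac{1}{x+3}\,t_2^2$, which is $t_2$-flat (associated sequence $p_1=0$, $p_0=\frac{1}{x+3}t_2^2$, with $\frac{1}{x+3}$ being $2$-rigid), and $\tilde p=\frac{1}{t_1^2}\,t_2$. Then $p-\tilde p=\frac{1}{x+3}t_2^2-\frac{1}{t_1^2}t_2$ is not $t_2$-flat: the term $-\frac{1}{t_1^2}t_2$ cannot be placed in the $q_0$-piece (it is not polynomial in $t_1$), so it must lie in $q_1$, and no matter how one splits the remaining terms, $\hc(q_1)$ is either a nonzero polynomial in $t_1$ (not $t_1$-proper) or has the non-squarefree denominator $t_1^2$; either way it fails to be $t_1$-simple. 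So your intermediate claim, asserted without using the congruence $p-\tilde p\equiv 0 \bmod K_n^\prime$, is simply wrong, and using that congruence to rescue it would be circular (once $p-\tilde p\in K_n^\prime$ and flatness were both known, the conclusion $p-\tilde p=0$ would already follow).

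The paper's proof of (iii) sidesteps this entirely, and this is exactly why Lemma~\ref{LM:flatlc} is formulated as a statement about the \emph{leading coefficient} of a flat polynomial rather than only through Proposition~\ref{PROP:fsin-field}. From $p-\tilde p\equiv 0 \bmod K_n^\prime$, Lemma~\ref{LM:lc} gives $\lc_{t_n}(p-\tilde p)\equiv c\,t_n^\prime \bmod K_{n-1}^\prime$ for some $c\in C$. If $\deg_{t_n}(\tilde p)<\deg_{t_n}(p)$, then $\lc_{t_n}(p-\tilde p)=\lc_{t_n}(p)$, so $\lc_{t_n}(p)\equiv c\,t_n^\prime \bmod K_{n-1}^\prime$, and now Lemma~\ref{LM:flatlc} is applied to $p$ itself --- which \emph{is} $t_n$-flat by construction in part (i) --- to conclude $p=0$, contradicting $\deg_{t_n}(p)>\deg_{t_n}(\tilde p)\ge 0$. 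In short: in the flat case the object you feed to the rigidity machinery must be $p$, not $p-\tilde p$; your instinct that flatness ``is not determined by the leading coefficient alone'' was the right warning sign, but the fix is to change which polynomial the lemma is applied to, not to argue that flatness survives the subtraction.
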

\begin{proof} (i) Applying Algorithm {\tt HermiteReduce} to $f$ with respect to~$t_n$, we get a $t_n$-simple element $g$ of~$K_n$
and an element $h$ of~$K_{n-1}[t_n]$ such that $f \equiv g + h \mod K_n^\prime.$
We can replace $h$ with a $t_n$-flat polynomial~$p$
 by Lemma~\ref{LM:fspoly}.

(ii) Assume $f \in  K_n^\prime$. Then \eqref{EQ:fadd} becomes $g + p \equiv 0 \mod K_n^\prime$. Applying the map $\hp_{t_n}$ to the above congruence
yields $g = 0$ by Lemma~\ref{LM:hp}. Thus, $p \equiv 0 \mod K_n^\prime$. Consequently, $p=0$ by Proposition~\ref{PROP:fsin-field}.

(iii) Since $(g - \tilde{g}) + (p - \tilde{p}) \equiv 0 \mod K_n^\prime$ and $g  - \tilde g$ is $t_n$-simple, we have $g = \tilde g$ by Lemma~\ref{LM:hp}.
So $p - \tilde p \equiv 0 \mod K_n^\prime$.  By Lemma~\ref{LM:lc}, we have  $\lc_{t_n}(p - \tilde p) \equiv ct_n^\prime \mod K_{n-1}^\prime$ for some $c \in C$.
If $\deg_{t_n}(\tilde p)$ is smaller than~$\deg_{t_n}(p)$, then $\lc_{t_n}(p) = \lc_{t_n}(p - \tilde p) \equiv c t_n^\prime \mod K_{n-1}^\prime$.
By Lemma~\ref{LM:flatlc}, we conclude $p=0$, a contradiction.
\end{proof}
\section{Elementary integrability} \label{SECT:elem}
In this section, we study elementary integrability of elements in a straight or flat tower
by Theorems~\ref{TH:sadd} and~\ref{TH:fadd}.

\begin{thm} \label{TH:elem}
{Let the tower be given in~\eqref{EQ:ext}, in which $C$ is algebraically closed, $K_0=C(t_0)$ and $t_i$ be a $C$-linear combination of logarithmic monomials over $K_{i-1}$ for all $i$ with $1 \le i \le n$.}
Assume that, for $f \in K_n$,
\begin{equation} \label{EQ:elem}
f \equiv g + p \mod K_n^\prime,
\end{equation}
where $g$ and $p$ are described in~\eqref{EQ:sadd} if \eqref{EQ:ext} is straight, and in~\eqref{EQ:fadd} if \eqref{EQ:ext} is flat, respectively.
Then $f$ is elementarily integrable over $K_n$ if and only if $g+p$ is a $C$-linear combination of logarithmic derivatives in $K_n$.
\end{thm}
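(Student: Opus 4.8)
The backbone of the proof is Liouville's theorem. Because $C$ is algebraically closed and $C_{K_n}=C$, an element $f\in K_n$ has an elementary antiderivative over $K_n$ if and only if there are $v\in K_n$, finitely many $c_i\in C$ and $u_i\in K_n\setminus\{0\}$ with $f=v^\prime+\sum_i c_i u_i^\prime/u_i$ (see~\cite{BronsteinBook}); equivalently, $f\equiv L\mod K_n^\prime$ for some $C$-linear combination $L$ of logarithmic derivatives in $K_n$. The direction $(\Leftarrow)$ is then immediate: if $g+p$ is such a combination, then $f\equiv g+p\mod K_n^\prime$ already exhibits $f$ as a derivative plus a $C$-linear combination of logarithmic derivatives, so $f$ is elementarily integrable.

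For $(\Rightarrow)$ I would fix $L$ as above, so that $g+p\equiv L\mod K_n^\prime$. Applying $\hp_{t_n}$, whose kernel is $K_n^\prime+K_{n-1}[t_n]$, gives $g=\hp_{t_n}(f)=\hp_{t_n}(L)$, which by Lemma~\ref{LM:ld} is the $t_n$-simple summand of a $C$-linear combination of logarithmic derivatives and hence is itself such a combination. Writing $L=\hp_{t_n}(L)+r_n=g+r_n$ with $r_n\in K_{n-1}$ a $C$-linear combination of logarithmic derivatives (again Lemma~\ref{LM:ld}) leaves $p\equiv r_n\mod K_n^\prime$. Thus the theorem reduces to the following Key Lemma, which I would prove by induction on $n$, uniformly for both kinds of towers: if $P\in K_{n-1}[t_n]$ is $t_n$-straight (resp.\ $t_n$-flat) and $P\equiv R\mod K_n^\prime$ for some $C$-linear combination $R$ of logarithmic derivatives in $K_{n-1}$, then $P$ is a $C$-linear combination of logarithmic derivatives.

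The plan for the Key Lemma proceeds in three moves. First I would force the $t_n$-degree to zero: since $P-R\in K_n^\prime$ with $R\in K_{n-1}$, Lemma~\ref{LM:lc} gives $\lc_{t_n}(P-R)\equiv c\,t_n^\prime\mod K_{n-1}^\prime$ for some $c\in C$; if $\deg_{t_n}(P)\ge 1$ this equals $\lc_{t_n}(P)$, so Lemma~\ref{LM:rigid} in the straight case and Lemma~\ref{LM:flatlc} in the flat case force $P=0$, a contradiction. Hence $P\in K_{n-1}$, and $P$ is $t_n$-rigid (resp.\ a degree-zero $t_n$-flat element). Second, I would pin down $K_n^\prime\cap K_{n-1}$: an antiderivative in $K_n$ of an element of $K_{n-1}$ lies in $K_{n-1}[t_n]$ by Theorem~4.4.2 in~\cite{BronsteinBook}, and a short degree count using $t_n^\prime\notin K_{n-1}^\prime$ (which holds because $C_{K_n}=C$) shows $K_n^\prime\cap K_{n-1}=K_{n-1}^\prime+C\,t_n^\prime$. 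Therefore $P=M+s^\prime$ with $s\in K_{n-1}$ and $M=R+c_0 t_n^\prime$ for some $c_0\in C$; since $t_n^\prime$ is a $C$-linear combination of logarithmic derivatives in $K_{n-1}$ by the hypothesis on the tower, $M$ is again such a combination.

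The third move is the descent, and I expect it to be the crux. Applying $\hp_{t_{n-1}}$ to $P=M+s^\prime$ and using that $\fp_{t_{n-1}}(P)$ is $t_{n-1}$-simple (Definition~\ref{DEF:rigid} in the straight case, Lemma~\ref{LM:rec} in the flat case) identifies $\fp_{t_{n-1}}(P)=\hp_{t_{n-1}}(M)$, a $C$-linear combination of logarithmic derivatives; writing $M=\hp_{t_{n-1}}(M)+\mu$ with $\mu\in K_{n-2}$ such a combination then gives $\pp_{t_{n-1}}(P)=\mu+s^\prime\equiv\mu\mod K_{n-1}^\prime$. Here $\pp_{t_{n-1}}(P)\in K_{n-2}[t_{n-1}]$ has $t_{n-1}$-rigid leading coefficient by Definition~\ref{DEF:rigid}, hence is $t_{n-1}$-straight, while Lemma~\ref{LM:rec} gives $t_{n-1}$-flatness in the flat case; thus this is precisely the Key Lemma one level down, the induction hypothesis makes $\pp_{t_{n-1}}(P)$ a $C$-linear combination of logarithmic derivatives, and so is $P=\fp_{t_{n-1}}(P)+\pp_{t_{n-1}}(P)$ (the base case $n=1$ being covered by Lemma~\ref{LM:ld}, since $t_0$-rigidity kills the $t_0$-polynomial part). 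The delicate point throughout is that integration by parts produces non-logarithmic \emph{derivative debris} $s^\prime$ that must be absorbed rather than obstruct the structure; what makes this succeed is that rigidity (resp.\ flatness) of $P$ is inherited verbatim by $\pp_{t_{n-1}}(P)$, so the same statement reappears at each lower level, while the linear independence of $t_1^\prime,\dots,t_n^\prime$ over $C$ (Remark~\ref{RE:indep}) prevents the constants accumulated along the descent from masking a genuine obstruction.
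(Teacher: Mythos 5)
Your proposal is correct and follows essentially the same route as the paper's proof: Liouville's theorem, splitting off the Hermitian part via Lemma~\ref{LM:hp} and Lemma~\ref{LM:ld}, killing any positive $t_n$-degree with Lemma~\ref{LM:lc} combined with Lemma~\ref{LM:rigid} (resp.\ Lemma~\ref{LM:flatlc}), and then descending one level of the tower by induction, using Definition~\ref{DEF:rigid} (resp.\ Lemma~\ref{LM:rec}) to see that the structure is inherited. Your ``Key Lemma'' is just the paper's induction statement with the $t_n$-simple part already stripped off, and your Move 2 (the identity $K_n^\prime \cap K_{n-1} = K_{n-1}^\prime + C\,t_n^\prime$) is Lemma~\ref{LM:lc} specialized to degree-zero polynomials, so the two arguments coincide step by step.
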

\begin{proof}
We denote by $\bL_i$ the $C$-linear subspace spanned by the logarithmic derivatives in $K_i$
for all $i$ with $0 \le i \le n$.

Clearly, $f$ is elementarily integrable over $K_n$ if  $g+p \in \bL_n$.

Conversely,
there exists $r \in \bL_n$ such that $ f \equiv r \mod K_n^\prime$ by Liouville's theorem {  (\cite[Theorem 5.5.1]{BronsteinBook})}.
{  By~\eqref{EQ:elem},
\begin{equation} \label{EQ:logder1}
g + p \equiv r \mod K_n^\prime.
\end{equation}
Note that $\hp_{t_n}(g+p) = g$, as $g$ is $t_n$-simple.
So $g = \hp_{t_n}(r)$ by~\eqref{EQ:logder1} and Lemma~\ref{LM:hp}. Hence, $g \in \bL_n$ by Lemma~\ref{LM:ld}.}
Set $\tilde r$ to be  $r - \hp_{t_n}(r)$.
Then $\tilde r \in \bL_{n-1}$ by Lemma~\ref{LM:ld}.
Moreover, \eqref{EQ:logder1} becomes
\begin{equation} \label{EQ:logder2}
p \equiv \tilde r \mod K_n^\prime.
\end{equation}
We show that~\eqref{EQ:logder1} implies  $g+p \in \bL_n$ by induction.
If $n=0$, then~$p$ is zero. The assertion holds.
Assume that the assertion holds for~$n-1$.
Let $d = \deg_{t_n}(p)$ and $\ell = \lc_{t_n}(p)$.

\smallskip \noindent
{\em Case 1}.\ $d>0$. Then $\ell = \lc_{t_n}(p-\tilde r)$, which,
together with~\eqref{EQ:logder2} and Lemma~\ref{LM:lc}, implies that $\ell \equiv c t_n^\prime \mod K_{n-1}^\prime$ for some $c \in C$.
Then $\ell$ is equal to $0$ by Lemma~\ref{LM:rigid} in the straight case and by Lemma~\ref{LM:flatlc} in the flat case, a contradiction.

\smallskip \noindent
{\em Case 2}.\ $d=0$. Then $\ell=p$.
So
$\ell \equiv \tilde r + ct_n^\prime \mod K_{n-1}^\prime$ for some $c \in C$ by~\eqref{EQ:logder2} and  Lemma~\ref{LM:lc}.
Consequently, $\ell$ is elementarily integrable over $K_{n-1}$, because $\tilde r, t_n^\prime \in \bL_{n-1}$.
Moreover, the above congruence can be rewritten as
\begin{equation} \label{EQ:logder3}
 \fp_{t_{n-1}}(\ell) + \pp_{t_{n-1}}(\ell) \equiv \tilde r + ct_n^\prime \mod K_{n-1}^\prime
 \end{equation}
Note that $\fp_{t_{n-1}}(\ell)$ is $t_{n-1}$-simple and $\pp_{t_{n-1}}(\ell)$ is $t_{n-1}$-straight (resp.\ flat)
by Definition~\ref{DEF:spoly} (resp.\ Lemma~\ref{LM:rec}).
By~\eqref{EQ:logder3} and the induction hypothesis, we see that $\ell \in \bL_{n-1}$. Therefore, $p$ belongs to $\bL_{n-1}$.
Accordingly, $g+p \in \bL_n$.
\end{proof}

Determining whether an element $r$ in $K_n$ is a $C$-linear combination of
logarithmic derivatives amounts to computing partial fraction decompositions and
Rothstein--Trager resultants by Theorem 4.4.3 in~\cite{BronsteinBook}.
So elementary integrability in straight and flat towers can be checked by merely
algebraic computation whenever a decomposition in the form \eqref{EQ:elem} is available.
\begin{ex} \label{EX:elem2}
Let $K_0$, $t_1$ and $t_2$ be given as in Example~\ref{EX:flatpoly}.
We compute an additive decomposition for
$$f =  \frac{1}{x t_1}+ \frac{1}{x t_2 +t_2} + t_1^2 t_2 + \frac{2}{x}t_1 t_2 + \frac{2}{x+1}t_1 + \frac{1}{x+2}.$$
By Theorem~\ref{TH:fadd} and Example~\ref{EX:flatpoly}, we have
\[  f = a^\prime + \underbrace{  \frac{1}{x t_2 +t_2} }_g+\underbrace{ \frac{1}{x t_1}+\frac{1}{x+2}}_p, \]
where $a =(x+1)t_1^2t_2-2xt_1t_2-xt_1^2+(2x+2)t_2+4xt_1-6x$. As the Rothstein--Trager resultant of each fraction in $g+p$
has only constant roots, $g+p$ is a $C$-linear combination of logarithmic derivatives
in~$K_2$. So $f$ is elementarily integrable over $K_2$ by Theorem~\ref{TH:elem}. Indeed,
\[ \int f \, dx = a + \log(t_2 ) + \log(t_1) + \log(x+2).\]
\end{ex}
\section{Telescopers for  elementary  functions}\label{SECT:ct}

The problem of creative telescoping is classically formulated for $D$-finite functions in terms of linear differential operators~\cite{Almkvist1990, Zeilberger1991}.
Raab in his thesis~\cite{Raab2012} has studied the telescoping problem viewed as a special case of the parametric integration problem in
differential fields. However, there are no theoretical results concerning the existence of telescopers for elementary functions.
To be more precise, let $F$ be a differential field with two derivations $D_x$ and~$D_y$ that commute with each other
and let $F_\partial$ be the set $\{f\in F \mid \partial(f) = 0\}$ for $\partial \in \{D_x, D_y\}$.
For a given element $f\in F$, the telescoping problem asks whether there exists a nonzero linear differential
operator $L = \sum_{i=0}^d \ell_i D_x^i$ with $\ell_i\in F_{D_y}$ such that $L(f)=D_y(g)$ for some~$g$ in a  specific differential extension~$E$ of~$F$.
We call $L$ a \emph{telescoper} for~$f$ and $g$ the corresponding \emph{certificate} for $L$ in~$E$.
Usually, we take~$E$ to be the field $F$ itself or an elementary extension of $F$.
In contrast to $D$-finite functions,  telescopers may not exist for elementary functions as shown in the following example.

\begin{ex} Let $F = \bC(x, y)$ and $E = F(t_1, t_2)$ be a differential field extension of $F$ with
$$t_1 = \log(x^2+y^2) \quad \text{and} \quad  t_2 = \log(1 + t_1).$$
We first show that $f= 1/t_1 \in F(t_1)$ has no telescoper with certificate in any elementary extension of $F(t_1)$.
Since $t_1$ is a primitive monomial over $F$, we have $F_{D_y} = \bC(x)$.
We claim that for any $i\in \bN$, $D_x^i(f)$ can be decomposed as
\[D_x^i(f) = D_y(g_i) + \frac{a_i}{t_1},\]
where~$g_i\in F(t_1)$, and $a_i \in F$ satisfies the recurrence relation
\[a_{i+1} = D_x(a_i)-D_y\left(\frac{xa_i}{y}\right)\quad \text{with~$a_0 = 1$.}\]
For $n=0$, the claim holds by taking $g_0 = 0$.
Assume that the claim holds for all $i<k$. Applying the induction hypothesis and
Algorithm {\tt HermiteReduce} to $D_x^{k}(f)$ yields
\begin{align*}
 D_x^{k}(f)  & = D_x(D_x^{k-1}(f)) = D_x\left(D_y(g_{k-1})+ \frac{a_{k-1}}{t_1}\right) \\
   & {=} D_y\left(D_x(g_{k-1}) + \frac{a_{k-1}x}{yt_1}\right){+} \frac{D_x(a_{k-1}){-}D_y(\frac{x a_{k-1}}{y})}{t_1}.
\end{align*}
This completes the induction. A straightforward calculation
shows that $a_i = A_i/y^{2i}$ for some $A_i\in \bC[x, y]\setminus \{0\}$ with $\deg_y(A_i)<2i$.
Using the notion of residues in~\cite[page 118]{BronsteinBook}, we have
 \[ \operatorname{residue}_{t_1}\left(\frac{a_i}{t_1}\right) = \frac{a_i}{D_y(t_1)} = \frac{(x^2+y^2)A_i}{2y^{2i+1}},\]
which is not in $\bC(x)$. Then $D_x^i(f)$ is not elementarily integrable over $F(t_1)$ for any $i\in \bN$
by the residue criterion in~\cite[Theorem 5.6.1]{BronsteinBook}. Assume that $f$ has a telescoper $L := \sum_{i=0}^d \ell_i D_x^i$
with $\ell_i \in \bC(x)$ not all zero. Then $L(f)$ is elementarily integrable over $F(t_1)$. However,
\[L(f) =D_y\left(\sum_{i=0}^d \ell_i g_i\right) + \frac{\sum_{i=0}^d \ell_i a_i}{t_1}.\]
Since all of the $\ell_i$'s are in $\bC(x)$ and $\gcd(x^2+y^2, y^m)=1$ for any $m\in \bN$, the residue of $\sum_{i=0}^d \ell_i a_i/t_1$ is not  in $\bC(x)$, which implies
that~$L(f)$ is not elementarily integrable over~$F(t_1)$, a contradiction.

We now show that $p = f t_2 + 1\in F(t_1)[t_2]$ has no telescoper with certificate in any elementary extension of $F(t_1, t_2)$.
Since $t_2$ is also a primitive monomial over $F(t_1)$, we have $E_{D_y} = \bC(x)$.
Assume that $L := \sum_{i=0}^d \ell_i D_x^i$ with $\ell_i \in \bC(x)$ not all zero
is a telescoper for $p$. Then $L(p)$ is elementarily integrable over~$E$.
By a direct calculation, we get
$L(p) = L(f)t_2 + r$ with $r\in F(t_1)$. The elementary integrability of $L(p)$ implies that
$L(f) = cD_y(t_2) + D_y(b)$ for some $c\in \bC(x)$ and $b\in F(t_1)$ by the formula (5.13)
in the proof of Theorem 5.8.1 in~\cite[page 157]{BronsteinBook}. We claim that~$c=0$.
Since $D_x^i(f)= u_{i}/t_1^{i+1}$ with $u_i\in F[t_1]$ and $\deg_{t_1}(u_i)< i+1$ and $D_y(t_2) = D_y(t_1)/(1+t_1)$,
the orders of $D_x^i(f)$ and $D_y(t_2)$ at $1+t_1$ are equal to $0$ and $1$, respectively.
If $c$ is not zero, the order of $cD_y(t_2)$ at $1+t_1$ is equal to $1$, which does not match with
that of $L(f)-D_y(b)$ by Theorem 4.4.2~(i) in \cite{BronsteinBook}, a contradiction. Then $L(f)=D_y(b)$, i.e.,
$L$ is a telescoper for~$f$. This contradicts with the first assertion we have shown.
\end{ex}
The next example shows that additive decompositions in Theorems~\ref{TH:sadd} and~\ref{TH:fadd}
are useful for detecting the existence of telescopers for elementary functions that are
not $D$-finite.

\begin{ex} Let~$F = \bC(x, y)$ and $E = F(t)$ be a differential field extension of $F$
with~$t = \log(x^2+y^2)$. Consider the function
$f = t+1 - \frac{2y}{(x^2+y^2)t^2}$.
Since the derivatives $D_x^i(1/t^2) = a_i/t^{i+2}$ with $a_i\in F\setminus\{0\}$  are linearly independent over $F$,
we see that $1/t^2$ is not $D$-finite over $F$, and nether is $f$.
Note that $f$ can be decomposed as
\[f = D_y(1/t) + t+1.\]
Since $t+1$ is D-finite, it has a telescoper, and so does $f$.
\end{ex}

\section{Conclusion}
In this paper, we developed additive decompositions in straight and flat towers, which enable us to
determine in-field and elementary integrability in a straightforward manner.
It is natural to ask whether one can develop an additive decomposition in a general primitive tower. Moreover, we plan to investigate about the existence and the construction of telescopers for elementary functions using additive decompositions.

\bibliographystyle{plain}



{

\def\cprime{$'$}

}

\end{document}